\let\proof\relax   
\newtheorem{lemma}{Lemma}
\newtheorem{theorem}{Theorem}
\newtheorem{example}{Example}
\DeclareMathOperator*{\argmin}{\arg\!\min}
\newcommand*{\transpose}{%
  {\mathpalette\@transpose{}}%
}
\begin{document}

\newcommand{\SB}[3]{
\sum_{#2 \in #1}\biggl|\overline{X}_{#2}\biggr| #3
\biggl|\bigcap_{#2 \notin #1}\overline{X}_{#2}\biggr|
}

\newcommand{\Mod}[1]{\ (\textup{mod}\ #1)}

\newcommand{\overbar}[1]{\mkern 0mu\overline{\mkern-0mu#1\mkern-8.5mu}\mkern 6mu}

\makeatletter
\newcommand*\nss[3]{%
  \begingroup
  \setbox0\hbox{$\m@th\scriptstyle\cramped{#2}$}%
  \setbox2\hbox{$\m@th\scriptstyle#3$}%
  \dimen@=\fontdimen8\textfont3
  \multiply\dimen@ by 4             
  \advance \dimen@ by \ht0
  \advance \dimen@ by -\fontdimen17\textfont2
  \@tempdima=\fontdimen5\textfont2  
  \multiply\@tempdima by 4
  \divide  \@tempdima by 5          
  \ifdim\dimen@<\@tempdima
    \ht0=0pt                        
    \@tempdima=\fontdimen5\textfont2
    \divide\@tempdima by 4          
    \advance \dimen@ by -\@tempdima 
    \ifdim\dimen@>0pt
      \@tempdima=\dp2
      \advance\@tempdima by \dimen@
      \dp2=\@tempdima
    \fi
  \fi
  #1_{\box0}^{\box2}%
  \endgroup
  }
\makeatother

\makeatletter
\renewenvironment{proof}[1][\proofname]{\par
  \pushQED{\qed}%
  \normalfont \topsep6\p@\@plus6\p@\relax
  \trivlist
  \item[\hskip\labelsep
        \itshape
    #1\@addpunct{:}]\ignorespaces
}{%
  \popQED\endtrivlist\@endpefalse
}
\makeatother

\makeatletter
\newsavebox\myboxA
\newsavebox\myboxB
\newlength\mylenA

\newcommand*\xoverline[2][0.75]{%
    \sbox{\myboxA}{$\m@th#2$}%
    \setbox\myboxB\null
    \ht\myboxB=\ht\myboxA%
    \dp\myboxB=\dp\myboxA%
    \wd\myboxB=#1\wd\myboxA
    \sbox\myboxB{$\m@th\overline{\copy\myboxB}$}
    \setlength\mylenA{\the\wd\myboxA}
    \addtolength\mylenA{-\the\wd\myboxB}%
    \ifdim\wd\myboxB<\wd\myboxA%
       \rlap{\hskip 0.5\mylenA\usebox\myboxB}{\usebox\myboxA}%
    \else
        \hskip -0.5\mylenA\rlap{\usebox\myboxA}{\hskip 0.5\mylenA\usebox\myboxB}%
    \fi}
\makeatother

\xpatchcmd{\proof}{\hskip\labelsep}{\hskip3.75\labelsep}{}{}

\pagestyle{plain}

\title{\fontsize{22.59}{28}\selectfont A Simple and Efficient Strategy for the Coin Weighing Problem with a Spring Scale}

\author{Esmaeil Karimi, Fatemeh Kazemi, Anoosheh Heidarzadeh, and Alex Sprintson\thanks{The authors are with the Department of Electrical and Computer Engineering, Texas A\&M University, College Station, TX 77843 USA (E-mail: \{esmaeil.karimi, fatemeh.kazemi, anoosheh, spalex\}@tamu.edu).}}


\maketitle 

\thispagestyle{plain}

\begin{abstract}
This paper considers a generalized version of the coin weighing problem with a spring scale that lies at the intersection of group testing and compressed sensing problems. Given a collection of $n\geq 2$ coins of total weight $d$ (for a known integer $d$), where the weight of each coin is an unknown integer in the range of $\{0,1,\dots,k\}$ (for a known integer $k\geq 1$), the problem is to determine the weight of each coin by weighing subsets of coins in a spring scale. The goal is to minimize the average number of weighings over all possible weight configurations. For $d=k=1$, an adaptive bisecting weighing strategy is known to be optimal. However, even the case of $d=k=2$, which is the simplest non-trivial case of the problem, is still open. For this case, we propose and analyze a simple and effective adaptive weighing strategy. A numerical evaluation of the exact recursive formulas, derived for the analysis of the proposed strategy, shows that this strategy requires about ${1.365\log_2 n -0.5}$ weighings on average. To the best of our knowledge, this is the first non-trivial achievable upper bound on the minimum expected required number of weighings for the case of $d=k=2$. As $n$ grows unbounded, the proposed strategy, when compared to an optimal strategy within the commonly-used class of nested strategies, requires about $31.75\%$ less number of weighings on average; and in comparison with the information-theoretic lower bound, it requires at most about $8.16\%$ extra number of weighings on average. 
\end{abstract}

\section{introduction}
In this work, we consider a generalized version of the coin weighing (CW) problem with a spring scale~\cite{B:09}. Suppose that there is a collection of $n\geq 2$ coins of total weight $d$, where each coin has an unknown integer weight in the set $\{0,1\dots,k\}$, for some known integers $d\geq 1$ and ${k\geq 1}$. The goal is to determine the weight of each coin by weighing subsets of coins in a spring scale. The problem is to devise an adaptive weighing strategy, where each weighing can depend on the results of the previous weighings, that minimizes (i) the maximum number of required weighings over all possible weight configurations (worst-case setting), or (ii) the average number of required weighings over all possible weight configurations (average-case setting). 




The CW problem lies at the intersection of group testing and compressed sensing problems. In particular, for $k=1$ and $d\leq n$, the CW problem is equivalent to the combinatorial quantitative group testing problem, see, e.g.,~\cite{DH:2000}. 
Also, for $d\ll n$ and $k\geq 1$, the CW problem is equivalent to the integral compressed sensing problem where both the signal and the sensing matrix are integer valued, see, e.g.,~\cite{KKLP:17}. 

For $d=k=1$, a simple adaptive bisecting weighing strategy is optimal in both worst-case and average-case settings~\cite{WZC:17}. However, the simplest non-trivial case of the problem, i.e., $d=k=2$, is still open, and hence the focus of this work. For the worst-case setting, a simple information-theoretic argument yields a lower bound on the minimum required number of weighings by $\max\{\log_2 n,\log_3\binom{n}{2}\}$ (see Theorem~\ref{lem:bound}); and for the average-case setting, a similar argument gives a lower bound of $\frac{2}{n+1}\log_2 n+\frac{n-1}{n+1}\log_3\binom{n}{2}$ on the minimum expected required number of weighings (see Theorem~\ref{lem:bound}). Notwithstanding, the question whether these lower bounds are achievable remains open. For the worst-case setting, ${2\log_2 n-1}$ weighings are known to be sufficient, and this bound is achievable by a simple nested strategy (see~\cite[Lemma~1]{WZC:17}). This quantity also serves as an upper bound for the average-case setting, and no tighter achievable upper bound was previously reported.

\subsection{Related Work and Applications}
The worst-case setting of the CW problem was originally proposed in~\cite{S:60} for $k=1$ and unknown $d$, and was later studied for $k=1$ and known $d$, e.g., in~\cite{L:75,AS:85,A:86}. Various order-optimal strategies were previously proposed for unknown $d$, see, e.g.,~\cite{L:75,MK:90}, and for known $d$, see, e.g.,~\cite{A:88,UTW:2000,B:09,EM:14,ARP:17}. Recently, in~\cite{B:09}, Bshouty proposed the first and only known order-optimal strategy for any $k>1$ and unknown $d$, and no such result exists for any $k>1$ and known $d$. Despite the rich literature on the worst-case setting, there was no result for the average-case setting of the CW problem prior to the present work, excluding the results that trivially carry over from worst case into average case.  

The worst-case setting of the CW problem has also been extensively studied for a wide range of applications, e.g., multi-access communication, spectrum sensing, traffic monitoring, anomaly detection, and network tomography, to name a few (see, e.g.,~\cite{WZC:17}, and references therein). Moreover, most of these applications are being run repeatedly over time, and for such applications, the average-case performance is expected to be more relevant than the worst-case performance. This observation is the primary motivation for studying the average-case setting of the CW problem in this work. 


\subsection{Main Contributions}
In this work, we propose and analyze a simple and effective adaptive weighing strategy for ${d=k=2}$. The results of our theoretical analysis show that the proposed strategy requires ${2\log_2 n-1}$ number of weighings in worst case, and it requires about ${1.365\log n -0.5}$ number of weighings on average. (The average-case result is obtained by a numerical evaluation of the exact recursive formulas, derived for the analysis of performance of the proposed strategy.) This is the first non-trivial achievable upper bound on the minimum expected required number of weighings for $d=k=2$. Additionally, for the average-case setting, we design and analyze an optimal strategy within the class of nested strategies, which are mostly being used in today's applications, that requires ${\frac{2n+1}{n+1}\log n-\frac{2(n-1)}{n+1}}$ weighings on average. A simple analysis shows that as $n$ grows unbounded, the proposed strategy, when compared to the optimal nested strategy, requires about $31.75\%$ less number of weighings on average; and when compared to the information-theoretic lower bound, the proposed strategy requires at most about $8.16\%$ extra number of weighings on average.





%


\section{Setup and Notations}\label{sec:SN}
Fix an integer $l\geq 1$, and let $n=2^{l}$. Let $N = \{1,\dots,n\}$. Consider a collection $N$ of $n$ coins, each coin $i\in N$ of an unknown integer weight $w_i\in \{0,1,2\}$. We refer to the set $\{w_1,\dots,w_n\}$, simply denoted by $\{w_i\}$, as the \emph{weight configuration}, or the \emph{configuration}, for short. For any $S\subseteq N$, denote by $w(S)$ the total weight of the subset $S$ of coins, i.e., $w(S) = \sum_{i\in S} w_i$. We assume that the total weight of $N$, i.e., $w(N)$, is equal to $2$. 

The problem is to determine the weight of all coins in $N$ by weighing subsets of $N$ in a spring scale. In the worst-case setting of the problem, the goal is to minimize the maximum number of required weighings over all possible configurations; and in the average-case setting of the problem, the goal is to minimize the expected number of required weighings over all possible configurations, where all possible configurations are assumed to be equally probable.  

Since $w(N)=2$ and $w_i\in\{0,1,2\}$ for all $i\in N$, there are $n$ distinct configurations such that $w_i = 2$ for some $i\in N$, and $w_j = 0$ for all $j\in N\setminus \{i\}$, and there are $\binom{n}{2}$ distinct configurations such that $w_{i}=w_{j} = 1$ for some $i,j\in N$ and $w_k = 0$ for all ${k\in N\setminus \{i,j\}}$. We refer to the first group of configurations as {\emph{Type-I}}, and refer to the second group as {\emph{Type-II}}. For example, for $n=2$, the possible configurations $\{w_1,w_2\}$ are $\{2,0\}$, $\{0,2\}$, and $\{1,1\}$, where the first two configurations are Type-I and the third one is Type-II. For the ease of exposition, we define a representative function $\Delta(\{w_i\}_{i\in S})$ for any $S\subseteq N$, $w(S)=2$, as follows. For any Type-I (sub-) configuration $\{w_i\}_{i\in S}$, $\Delta(\{w_i\}_{i\in S})=0$, and for any Type-II (sub-) configuration $\{w_i\}_{i\in S}$, ${\Delta(\{w_i\}_{i\in S}) =|i-j|}$, where $w_i=w_j=1$. 

Any adaptive weighing strategy $\Psi$ can be defined as a sequence $\{S_1,S_2,\dots\}$ of subsets of coins that are to be weighed following the prescribed order, where the choice of each subset $S_i$ can depend on $\{S_j\}_{j=1}^{i-1}$ and $\{w(S_j)\}_{j=1}^{i-1}$. 

Consider an arbitrary strategy $\Psi$. Denote by $T^{\Psi}_{\mathrm{ave}}(n)$ the expected number of weighings required by the strategy $\Psi$ to determine the weight of all coins in $N$, over all possible weight configurations. For any subset $S$ of coins, all with unknown weights, we denote by $T_w^{\Psi}(s)$ the expected number of weighings that the strategy $\Psi$ performs to determine the weight of all coins in $S$, where $s = |S|$ and $w = w(S)$. The expectation is taken over all possible (sub-) configurations $\{\tilde{w}_{i}\}_{i\in S}$, $\tilde{w}_i\in \{0,1,2\}$, such that ${\sum_{i\in S} \tilde{w}_i=w}$.


For any subset $S$ of coins, all with unknown weights, such that $w(S)=2$, denote by $T^{\Psi}(s|\Delta)$ the expected number of weighings that the strategy $\Psi$ performs to determine the weight of all coins in $S$, given that $\Delta(\{w_i\}_{i\in S}) = \Delta$, where $s = |S|$. Here, the expectation is taken over all possible (sub-) configurations $\{\tilde{w}_{i}\}_{i\in S}$, $\tilde{w}_i\in \{0,1,2\}$, such that ${\sum_{i\in S} \tilde{w}_i=2}$ and $\Delta(\{\tilde{w}_i\}_{i\in S}) = \Delta$. 

For any disjoint subsets $A$ and $B$ of coins, all with unknown weights, such that $w(A)=1$ and $w(B)=1$, denote by $T^{\Psi}(a,b)$ the expected number of weighings required by the strategy $\Psi$ to determine the weight of all coins in $A$ and $B$, where $a=|A|$ and $b=|B|$. The expectation is here taken over all possible (sub-) configurations $\{\tilde{w}_{i}\}_{i\in A}$ and $\{\tilde{w}_{i}\}_{i\in B}$, $\tilde{w}_i\in \{0,1\}$, such that ${\sum_{i\in A} \tilde{w}_i=1}$ and ${\sum_{i\in B} \tilde{w}_i=1}$. For convenience, we adopt the convention $T^{\Psi}(1,s) = T^{\Psi}(s,1)= T^{\Psi}_1 (s)$. 

From now on, whenever the strategy $\Psi$ is clear from the context, we omit the superscript $\Psi$, and denote $T^{\Psi}_{\mathrm{ave}}(n)$, $T_{w}^{\Psi}(s)$, $T^{\Psi}(s|\Delta)$, and $T^{\Psi}(a,b)$ by $T_{\mathrm{ave}}(n)$, $T_{w}(s)$, $T(s|\Delta)$, and $T(a,b)$, respectively. Moreover, we define $T_{\mathrm{max}}(n)$, $T_{w}^{\star}(s)$, $T^{\star}(s|\Delta)$, and $T^{\star}(a,b)$ similarly as $T_{\mathrm{ave}}(n)$, $T_{w}(s)$, $T(s|\Delta)$, and $T(a,b)$, respectively, except for the maximum number of weighings, instead of the expected number of weighings, that the strategy $\Psi$ must perform. 

\begin{theorem}\label{lem:bound}
For any weighing strategy $\Psi$, we have \[T^{\Psi}_{\mathrm{max}}(n)\geq \max\left\{\log_2 n,\log_3\binom{n}{2}\right\}\] and \[T^{\Psi}_{\mathrm{ave}}(n)\geq \frac{2}{n+1}\log_2 n+\frac{n-1}{n+1}\log_3\binom{n}{2}.\]
\end{theorem}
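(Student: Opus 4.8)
The plan is to model an arbitrary adaptive strategy $\Psi$ as a rooted decision tree in which each internal node is a weighing of some subset $S$, the (at most three) children of that node correspond to the possible reported weights $w(S)\in\{0,1,2\}$, and each leaf is labelled by the configuration that $\Psi$ outputs on reaching it. Since $\Psi$ must identify the configuration, distinct configurations end at distinct leaves, so the $n+\binom{n}{2}$ possible configurations occupy $n+\binom{n}{2}$ distinct leaves; for a configuration $C$ let $\ell_C$ be the depth of its leaf, which is exactly the number of weighings $\Psi$ performs on $C$. Then $T^{\Psi}_{\mathrm{max}}(n)=\max_C\ell_C$ and $T^{\Psi}_{\mathrm{ave}}(n)=\frac{1}{n+\binom{n}{2}}\sum_C\ell_C$, so it suffices to lower bound $\max_C\ell_C$ and $\sum_C\ell_C$.

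The key observation is that the two types of configurations exercise this tree with different branching. If $C$ is Type-I with $w_{i^\star}=2$, then weighing any subset $S$ returns $w(S)=2\cdot\mathbf{1}[i^\star\in S]\in\{0,2\}$, so only two of the three possible outcomes ever occur along the root-to-leaf path of a Type-I configuration; recording these outcomes as bits assigns to each of the $n$ Type-I configurations a binary word of length $\ell_C$, and these words form a prefix code since the leaves are pairwise incomparable in the tree. Kraft's inequality then gives $\sum_{C\,\text{Type-I}}2^{-\ell_C}\le 1$, and together with the elementary fact that $\sum_{i=1}^m D^{-\ell_i}\le 1$ forces $\frac1m\sum_i\ell_i\ge\log_D m$ (which follows from AM--GM, since $D^{-\frac1m\sum_i\ell_i}\le\frac1m\sum_i D^{-\ell_i}\le\frac1m$), this yields $\frac1n\sum_{C\,\text{Type-I}}\ell_C\ge\log_2 n$. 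For a Type-II configuration all three outcomes $0,1,2$ may occur, so the same argument over a ternary alphabet gives $\sum_{C\,\text{Type-II}}3^{-\ell_C}\le1$ and $\frac{1}{\binom{n}{2}}\sum_{C\,\text{Type-II}}\ell_C\ge\log_3\binom{n}{2}$. The worst-case bound is immediate, since $\max_C\ell_C$ is at least the average depth of the Type-I leaves ($\ge\log_2 n$) and at least that of the Type-II leaves ($\ge\log_3\binom{n}{2}$). For the average-case bound, a uniformly random configuration is Type-I with probability $n/(n+\binom{n}{2})=\frac{2}{n+1}$ and Type-II with probability $\binom{n}{2}/(n+\binom{n}{2})=\frac{n-1}{n+1}$, so splitting $\sum_C\ell_C$ by type gives $T^{\Psi}_{\mathrm{ave}}(n)\ge\frac{2}{n+1}\log_2 n+\frac{n-1}{n+1}\log_3\binom{n}{2}$.

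The only real subtlety --- and the conceptual core of the argument --- is that the Kraft/AM--GM estimate must be applied separately to the Type-I and Type-II leaves, with branching factors $2$ and $3$ respectively; this is legitimate precisely because, once the type is fixed, each weighing realizes only $2$ (resp.\ $3$) of its three possible outcomes, and it is exactly this refinement that improves the naive $\log_3 n$ one would otherwise get for the $n$ Type-I configurations to the claimed $\log_2 n$. The remaining ingredients --- the decision-tree model, Kraft's inequality, and the AM--GM step --- are routine.
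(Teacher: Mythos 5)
Your proof is correct and follows the same information-theoretic idea as the paper's: a Type-I configuration can only produce two distinct outcomes per weighing while a Type-II configuration can produce three, so the $n$ Type-I and $\binom{n}{2}$ Type-II configurations must be separated by effectively binary and ternary decision trees, respectively, and the average-case bound follows by weighting the two types with probabilities $\frac{2}{n+1}$ and $\frac{n-1}{n+1}$. Your decision-tree/Kraft/AM--GM formalization is in fact more careful than the paper's own argument, which passes from the per-type worst-case bounds to the per-type \emph{average} leaf depths (the step actually needed for the average-case inequality) without the convexity argument that justifies it.
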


\begin{proof}
Recall that there are two types of weight configurations: Type-I and Type-II. For any Type-I configuration, the result of weighing on any subset of coins is either zero or non-zero, and the number of distinct possible configurations of Type-I is $n$. Thus, at least $\log_2 n$ weighings are needed to distinguish a particular configuration of this type. For any Type-II configuration, the result of weighing on any subset of coins can be $0$, or $1$, or $2$. Thus, there are $\binom{n}{2}$ distinct possible configurations of this type, and to distinguish a particular configuration within this class, one needs at least $\log_3 \binom{n}{2}$ weighings. Accordingly, for a configuration of an unknown type, at least $\max\{\log_2 n,\log_3\binom{n}{2}\}$ weighings are required to identify the configuration. Since all configurations are equally probable, it can be easily verified that a randomly chosen configuration is of Type-I or of Type-II with probability $\frac{2}{n+1}$ or $\frac{n-1}{n+1}$, respectively. Consequently, on average, at least $\frac{2}{n+1}\log_2 n+\frac{n-1}{n+1}\log_3\binom{n}{2}$ weighings are necessary to identify a particular configuration of an unknown type. 
\end{proof}



\section{Proposed Weighing Strategy}\label{sec:PS}
In this section, we propose a weighing strategy that determines the weight of all coins, for the setup in Section~\ref{sec:SN}. 

For any set $S=\{i_1,\dots,i_{|S|}\}$ such that $|S|$ is a power of $2$, we denote by $S_1$ and $S_2$ the two disjoint subsets $\{i_1,\dots,i_{|S|/2}\}$ and $\{i_{|S|/2+1},\dots,i_{|S|}\}$, respectively. 

The proposed strategy is based on three recursive procedures $\Pi_0$, $\Pi_1$, and $\Pi_2$, described shortly. At the beginning, the strategy initializes the weight of all coins by zero, i.e., $\hat{w}_i=0$ for all $i\in N$. Then, it starts with the procedure $\Pi_0$ over the set $N$. The weights of coins will be updated recursively according to the procedures $\Pi_0$, $\Pi_1$, and $\Pi_2$. This process is terminated once the sum of weights of all coins, $\sum_{i\in N} \hat{w}_i$, is equal to $2$, and the strategy returns $\{\hat{w}_i\}_{i\in N}$.  

The inputs of the procedure $\Pi_0$ are a set $S$ and its weight $w(S)$. The procedure $\Pi_1$ takes as input two disjoint sets $A$ and $B$ such that $w(A)=w(B)=1$, and the procedure $\Pi_2$ takes as input two disjoint sets $A$ and $B$ such that $w(A)=w(B)=w(A_1\cup B_1)=1$. (Recall that $A_1=\{i_1,\dots,i_{|A|/2}\}$ and $B_1=\{j_1,\dots,j_{|B|/2}\}$ when $A = \{i_1,\dots,i_{|A|}\}$ and $B = \{j_1,\dots,j_{|B|}\}$.) We represent these procedures by $\Pi_0(S)$, $\Pi_1(A,B)$, and $\Pi_2(A,B)$, respectively. 

\subsection{Procedure $\Pi_0$}
For any $S=\{i\}$, the procedure $\Pi_0(S)$ updates $\hat{w}_i$ by $w(S)$; and for any $S$, $|S|>1$, the procedure $\Pi_0(S)$ begins with weighing $S_1$. If $w(S_1)=0$ or ${w(S_1)=2}$, the procedure $\Pi_0(S)$ continues with $\Pi_0(S_2)$ or $\Pi_0(S_1)$, respectively. Otherwise, depending on $w(S)=1$ or $w(S)=2$, the procedure $\Pi_0(S)$ continues with $\Pi_0(S_1)$ or $\Pi_1(S_1,S_2)$, respectively. We note that for $w(S_1)=0$ or $w(S_1)=2$, the procedure $\Pi_0$ follows a simple bisecting strategy, and for $w(S_1)=1$, it follows a generalized bisecting strategy defined below.

\subsection{Procedure $\Pi_1$}
For any $A=\{i\}$ and $B=\{j\}$, the procedure $\Pi_1(A,B)$ updates $\hat{w}_i$ and $\hat{w}_j$ by $1$; For any $A$ and $B$ such that $|A|=1$ and $|B|>1$ or $|A|>1$ and $|B|=1$, the procedure $\Pi_1(A,B)$ continues with two procedures $\Pi_0(A)$ and $\Pi_0(B)$. For any $A$ and $B$ such that $|A|>1$ and $|B|>1$, the procedure $\Pi_1(A,B)$ weighs $A_1\cup B_1$. If $w(A_1\cup B_1)=0$ or $w(A_1\cup B_1)=2$, the procedure $\Pi_1(A,B)$ continues with $\Pi_1(A_2,B_2)$ or $\Pi_1(A_1,B_1)$, respectively; otherwise, it continues with $\Pi_2(A,B)$. 

\subsection{Procedure $\Pi_2$}
For any $A=\{i_1,i_2\}$ and $B=\{j_1,j_2\}$, the procedure $\Pi_2(A,B)$ weighs $A_1=\{i_1\}$, and updates $\hat{w}_{i_1}$, $\hat{w}_{i_2}$, $\hat{w}_{j_1}$, and $\hat{w}_{j_2}$ by $w(A_1)$, $1-w(A_1)$, $1-w(A_1)$, and $w(A_1)$, respectively. For any $A$ and $B$ such that $\max(|A|,|B|)>2$ and $|A|\leq |B|$, the procedure $\Pi_2(A,B)$ weighs $A_1\cup (B_2)_1$. (Recall that $(B_2)_1 = \{j_{|B|/2+1},\dots,j_{3|B|/4}\}$ when $B_2 = \{j_{|B|/2+1},\dots,j_{|B|}\}$.) If $w(A_1\cup (B_2)_1)$ is equal to $0$, $1$, or $2$, the procedure $\Pi_2(A,B)$ continues with $\Pi_1(A_2,B_1)$, $\Pi_1(A_1,(B_2)_2)$, or $\Pi_1(A_1,(B_2)_1)$, respectively. For any $A$ and $B$ such that $\max(|A|,|B|)>2$ and $|B|<|A|$, the procedure is the same, except for $A$ and $B$ being interchanged.   

\begin{example}
Consider $n=8$ coins of weights $w_3=w_6=1$ and $w_i=0$ for all $i\not\in \{3,6\}$. Let $N = \{1,\dots,8\}$. 	Initialize $\hat{w}_i$ by $0$ for all $i\in N$. Applying $\Pi_0(N)$, the set $\{1,\dots,4\}$ is weighed. Since $w(\{1,\dots,4\})=1$, the strategy proceeds with $\Pi_1(\{1,2,3,4\},\{5,6,7,8\})$. According to the strategy, the set $\{1,2\}\cup \{5,6\}$ is weighed. Since ${w(\{1,2\}\cup \{5,6\})=1}$ the strategy continues with $\Pi_2(\{1,2,3,4\},\{5,6,7,8\})$. According to the procedure $\Pi_2$, weighing is performed on $\{1,2\}\cup \{7\}$. Since $w(\{1,2\}\cup \{7\})=0$, the strategy proceeds with $\Pi_1(\{3,4\},\{5,6\})$, and weighs $\{3\}\cup \{5\}$. Since $w(\{3\}\cup \{5\})=1$, the strategy continues with $\Pi_2(\{3,4\},\{5,6\})$. According to the procedure $\Pi_2$, the weighing is performed on $\{3\}$. Since $w(\{3\})=1$, the strategy updates $\hat{w}_3=1$, $\hat{w}_4=0$, $\hat{w}_5=0$, and $\hat{w}_6=1$. Since $\sum_{i\in N} \hat{w}_i=2$, the process is terminated.       
\end{example}

\section{Analysis of the Proposed Startegy}\label{sec:Analysis}
In this section, we analyze the average-case and worst-case performance of the strategy proposed in Section~\ref{sec:PS}. 

For simplifying the notation, for all $0\leq i,j\leq l$, we denote $T(2^{i},2^{j})$ and $T^{\star}(2^{i},2^{j})$ by $T_{i,j}$ and $T^{\star}_{i,j}$, respectively. 

\subsection{Average-Case Setting}
The following two lemmas are useful for computing $T_{i,j}$ recursively for different values of $i$ and $j$. (The proofs of all lemmas can be found in the appendix.)  

\begin{lemma}\label{lem:Tii}
$T_{0,0}=0$, $T_{1,1}=\frac{3}{2}$, and for all $1< i< l$, \[T_{i,i}=\frac{3}{4}T_{i-1,i-1}+\frac{1}{4}T_{i-2,i-1}+\frac{3}{2}.\]
\end{lemma}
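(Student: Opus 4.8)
The plan is to unpack the recursive structure of the procedure $\Pi_1$ acting on two disjoint sets $A$ and $B$ of equal size $2^i$ (with $w(A)=w(B)=1$), and to condition on the outcome of the first weighing it performs, namely the weight $w(A_1\cup B_1)$. First I would record the base cases: $T_{0,0}=0$ because when $|A|=|B|=1$ the procedure $\Pi_1$ simply sets $\hat w_i=\hat w_j=1$ with no weighing; and $T_{1,1}=\tfrac32$ by a direct enumeration — $\Pi_1(\{i_1,i_2\},\{j_1,j_2\})$ weighs $A_1\cup B_1=\{i_1,j_1\}$, and among the $4$ equally likely sub-configurations this weighs $0$ (once), $2$ (once), or $1$ (twice); the first two cases recurse into $\Pi_1$ on singletons (cost $0$ more), while $w=1$ triggers $\Pi_2(A,B)$, which on size-$2$ sets performs exactly one more weighing of $\{i_1\}$ and then stops. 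So the expected number is $1+\tfrac24\cdot 1=\tfrac32$.

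For the inductive step with $1<i<l$, I would argue that the relevant ``state'' the analysis cares about — the pair of sizes together with the fact that each half-weight is $1$ — is all that matters for the conditional expected cost, so that the four sub-configurations split cleanly by the value of $w(A_1\cup B_1)\in\{0,1,2\}$. Out of the $2^i\cdot 2^i$ equally likely placements of the two unit-weight coins (one in $A$, one in $B$), $w(A_1\cup B_1)=2$ occurs with probability $\tfrac14$ (both coins land in the first halves $A_1,B_1$, each of size $2^{i-1}$), $w(A_1\cup B_1)=0$ occurs with probability $\tfrac14$ (both land in $A_2,B_2$), and $w(A_1\cup B_1)=1$ with probability $\tfrac12$. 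In the first case the procedure recurses as $\Pi_1(A_1,B_1)$, a fresh instance on sets of size $2^{i-1}$ conditioned on each having weight $1$ with uniform placement, contributing $T_{i-1,i-1}$; the second case is symmetric, also contributing $T_{i-1,i-1}$; and the third case invokes $\Pi_2(A,B)$. Thus $T_{i,i}=1+\tfrac12 T_{i-1,i-1}+\tfrac12\,\mathbb E[\text{cost of }\Pi_2(A,B)\mid w(A_1\cup B_1)=1]$, and the whole problem reduces to evaluating that conditional $\Pi_2$ cost.

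The heart of the argument is therefore the analysis of $\Pi_2(A,B)$ with $|A|=|B|=2^i$, conditioned on $w(A)=w(B)=w(A_1\cup B_1)=1$. Here $\Pi_2$ weighs $A_1\cup(B_2)_1$ — but under our conditioning $B$'s unit coin lies in $B_1$, so $(B_2)_1\subseteq B_2$ contributes $0$, and the weighing result is just the indicator that $A$'s unit coin lies in $A_1$, which (given it is somewhere in $A$, uniformly) is $1$ with probability $\tfrac12$ and $0$ with probability $\tfrac12$; the $w=2$ branch is impossible. If the result is $1$, the procedure continues with $\Pi_1(A_1,(B_2)_1)$ — but wait, this branch needs care since $B$'s coin is in $B_1$, not $(B_2)_1$, so one must re-examine which branch actually fires; reading $\Pi_2$ literally, result $0$ leads to $\Pi_1(A_2,B_1)$ (cost $T_{i-1,i-1}$, since both coins now known to be in sets of size $2^{i-1}$ with correct weights) and result $1$ leads to $\Pi_1(A_1,(B_2)_2)$ — and $(B_2)_2$ has size $2^{i-2}$ while $A_1$ has size $2^{i-1}$, giving cost $T_{i-2,i-1}$. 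Averaging, $\mathbb E[\text{cost of }\Pi_2\mid\cdots]=1+\tfrac12 T_{i-1,i-1}+\tfrac12 T_{i-2,i-1}$. Substituting back yields $T_{i,i}=1+\tfrac12 T_{i-1,i-1}+\tfrac12\bigl(1+\tfrac12 T_{i-1,i-1}+\tfrac12 T_{i-2,i-1}\bigr)=\tfrac34 T_{i-1,i-1}+\tfrac14 T_{i-2,i-1}+\tfrac32$, as claimed.

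The main obstacle I anticipate is bookkeeping the conditioning correctly through the nested calls to $\Pi_2$: one must verify that conditioning on $w(A)=w(B)=w(A_1\cup B_1)=1$ makes the placement of the two unit coins uniform over the compatible sub-configurations at every recursive level, so that the recursion genuinely produces honest $T_{i-1,i-1}$ and $T_{i-2,i-1}$ subproblems (with the right ``weight-$1$, uniform placement'' distribution), and that the branch of $\Pi_2$ which is entered matches the size pattern feeding $T_{i-2,i-1}$ rather than some other $T_{\cdot,\cdot}$. I would handle this by tracking, at each step, the precise set in which each unit coin is known to lie and checking that $\Pi_2$'s choice of which subset to weigh and which $\Pi_1$ call to make is consistent with those constraints; once that is nailed down, the probabilities $\tfrac14,\tfrac14,\tfrac12$ and $\tfrac12,\tfrac12$ follow from uniformity and the identity collapses to the stated three-term recursion.
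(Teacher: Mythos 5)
Your overall architecture --- condition on the outcome of the first weighing $w(A_1\cup B_1)$, then analyze the second weighing performed by $\Pi_2$ --- is exactly the paper's, and your base cases $T_{0,0}=0$ and $T_{1,1}=\tfrac{3}{2}$ are handled the same way. The recursion you arrive at is the correct one. However, your analysis of $\Pi_2$ rests on a false claim: conditioning on $w(A)=w(B)=w(A_1\cup B_1)=1$ does \emph{not} place $B$'s unit coin in $B_1$. It says that exactly one of the two unit coins lies in $A_1\cup B_1$, so with probability $\tfrac{1}{2}$ the configuration is ($A$'s coin in $A_1$, $B$'s coin in $B_2$) and with probability $\tfrac{1}{2}$ it is ($A$'s coin in $A_2$, $B$'s coin in $B_1$). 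Consequently the weighing of $A_1\cup(B_2)_1$ is not the indicator of $A$'s coin lying in $A_1$: its outcome is $0$ with probability $\tfrac{1}{2}$, $1$ with probability $\tfrac{1}{4}$, and $2$ with probability $\tfrac{1}{4}$ --- the $w=2$ branch you declare impossible does occur (when $A$'s coin is in $A_1$ and $B$'s coin lands in $(B_2)_1$). You even notice the symptom of the error (``this branch needs care since $B$'s coin is in $B_1$, not $(B_2)_1$''); indeed, under your premise the continuation $\Pi_1(A_1,(B_2)_2)$ would be applied to a set of weight $0$, violating $\Pi_1$'s precondition --- but you resolve this by ``reading $\Pi_2$ literally'' rather than by correcting the conditional distribution.

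Your final formula survives only by coincidence: the outcome-$1$ branch continues with $\Pi_1(A_1,(B_2)_2)$ and the outcome-$2$ branch with $\Pi_1(A_1,(B_2)_1)$, both on sets of sizes $2^{i-1}$ and $2^{i-2}$, hence both cost $T_{i-2,i-1}$, and their probabilities $\tfrac{1}{4}+\tfrac{1}{4}$ happen to add up to the $\tfrac{1}{2}$ you assigned to a single misidentified branch. The paper's proof does the accounting correctly: after the two weighings, the extra expected cost is $T_{i-1,i-1}$ with probability $\tfrac{1}{2}$ (outcome $0$, continue with $\Pi_1(A_2,B_1)$) and $T_{i-2,i-1}$ with probability $\tfrac{1}{2}$ (outcomes $1$ and $2$ combined). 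To repair your write-up you need only replace the paragraph on $\Pi_2$ with this three-way case split; everything else stands.
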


\begin{lemma}\label{lem:Tij}
For all $1\leq j<l$, ${T_{0,j}=j}$; for all $1< j<l$, $T_{1,j}=j+\frac{1}{4}$; and for all $1< i\leq l-1$ and $1-i< j\leq l-i$, \[T_{i,i+j} = \frac{3}{4}T_{i-1,i+j-1}+\frac{1}{8}T_{i-2,i+j-1}+\frac{1}{8}T_{i+j-2,i+j-2}+\frac{3}{2}.\]	
\end{lemma}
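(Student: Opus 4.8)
The plan is to prove Lemma~\ref{lem:Tij} by a careful case analysis of the first weighing performed by procedure $\Pi_1(A,B)$, exactly mirroring the argument for Lemma~\ref{lem:Tii}. First I would dispose of the two base cases. For $T_{0,j}$ we have $|A|=1$ and $|B|=2^j>1$, so by definition $\Pi_1(A,B)$ calls $\Pi_0(A)$ (which costs $0$, since $A$ is a singleton of known weight $1$) followed by $\Pi_0(B)$; the latter is the generalized bisecting procedure on a set of size $2^j$ with total weight $1$, and since the weight configuration on $B$ is Type-I, $\Pi_0$ simply bisects, locating the unique coin of weight $1$ in exactly $j$ weighings regardless of which coin it is, giving $T_{0,j}=j$. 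For $T_{1,j}$ with $1<j<l$, we have $|A|=2$, $|B|=2^j$; the first weighing of $\Pi_1$ is on $A_1\cup B_1$, whose weight is $0$, $1$, or $2$; conditioning on that outcome and its probability (which I would compute by counting the sub-configurations with exactly one weight-$1$ coin in $A$ and one in $B$), the strategy recurses into $\Pi_1$ on smaller sets or into $\Pi_2$, and unwinding this one step together with the already-established values $T_{0,j-1}$, etc., should yield $j+\tfrac14$.

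For the main recursion, fix $1<i\le l-1$ and $1-i<j\le l-i$, and consider $\Pi_1(A,B)$ with $|A|=2^i>1$ and $|B|=2^{i+j}>1$ (the case $i+j=0$, i.e.\ $|B|=1$, is excluded precisely because it is handled by the $T_{i,0}=T_{0,i}$ base case / symmetry). The procedure weighs $A_1\cup B_1$, and I would partition the $2^i\cdot 2^{i+j}$ equally likely sub-configurations according to the weight of $A_1\cup B_1$:
\begin{itemize}
\item With probability $\tfrac14$, $w(A_1\cup B_1)=2$, i.e.\ the weight-$1$ coin of $A$ lies in $A_1$ and that of $B$ lies in $B_1$; the strategy continues with $\Pi_1(A_1,B_1)$, contributing $T_{i-1,i+j-1}$.
\item With probability $\tfrac14$, $w(A_1\cup B_1)=0$; the strategy continues with $\Pi_1(A_2,B_2)$, again contributing $T_{i-1,i+j-1}$.
\item With probability $\tfrac12$, $w(A_1\cup B_1)=1$, which splits into two equally likely sub-cases ($\Pi_2(A,B)$ then runs, whose first weighing is on $A_1\cup(B_2)_1$ or, by symmetry, $B_1\cup(A_2)_1$); I would show that the conditional expected cost of $\Pi_2(A,B)$ decomposes into a $\tfrac14$-weighted $T_{i-2,i+j-1}$ term plus a $\tfrac14$-weighted $T_{i+j-2,i+j-2}$ term (plus contributions that telescope), matching the coefficients $\tfrac18,\tfrac18$ after accounting for the overall probability $\tfrac12$ of landing in this branch.
\end{itemize}
Adding the $+1$ for weighing $A_1\cup B_1$ and the $+\tfrac12$ for the expected cost of the subsequent weighing inside the $w=1$ branch (which occurs with probability $\tfrac12$) yields the additive constant $\tfrac32$, and collecting terms gives the claimed identity
$T_{i,i+j} = \tfrac34 T_{i-1,i+j-1}+\tfrac18 T_{i-2,i+j-1}+\tfrac18 T_{i+j-2,i+j-2}+\tfrac32.$

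The main obstacle, and the step I would spend the most care on, is the analysis of procedure $\Pi_2$: I need to verify that when $\Pi_2(A,B)$ is invoked with $w(A)=w(B)=w(A_1\cup B_1)=1$, the weight-$1$ coins are uniformly distributed over the allowed positions, track how the measurement $w(A_1\cup(B_2)_1)$ partitions those positions into the three outcomes $0,1,2$ with the right probabilities, and confirm that each resulting recursive call $\Pi_1(A_2,B_1)$, $\Pi_1(A_1,(B_2)_2)$, or $\Pi_1(A_1,(B_2)_1)$ lands on sets of the sizes claimed (so that, e.g., a call on a singleton versus a size-$2^{i+j-2}$ set produces the $T_{i+j-2,i+j-2}$ term, while the "balanced-split" subcase produces $T_{i-2,i+j-1}$). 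Keeping the conditional probabilities correct across the two nested conditionings (first on $w(A_1\cup B_1)=1$, then on the outcome of $\Pi_2$'s first weighing), and handling the boundary ranges of $i$ and $j$ so that every $T_{\cdot,\cdot}$ appearing on the right-hand side has already been defined, is where the bookkeeping is delicate; everything else is routine arithmetic with the case probabilities $\tfrac14,\tfrac14,\tfrac12$.
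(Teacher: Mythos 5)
Your skeleton is the same as the paper's: dispose of $T_{0,j}$ and $T_{1,j}$ directly, then condition on the outcome of the weighing of $A_1\cup B_1$ (probabilities $\tfrac14,\tfrac14,\tfrac12$) and, inside the outcome-$1$ branch, on the outcome of the weighing performed by $\Pi_2$. The base cases are fine. The problem is that the one step you defer --- showing that the $\Pi_2$ branch contributes $\tfrac18 T_{i-2,i+j-1}+\tfrac18 T_{i+j-2,i+j-2}$ --- is the entire content of the lemma, and the verification you promise does not come out the way you assert. With $|A|=2^i\le |B|=2^{i+j}$, the three continuations of $\Pi_2(A,B)$ are $\Pi_1(A_2,B_1)$ on sets of sizes $2^{i-1}$ and $2^{i+j-1}$ (reached with conditional probability $\tfrac12$, contributing $T_{i-1,i+j-1}$), and $\Pi_1(A_1,(B_2)_2)$, $\Pi_1(A_1,(B_2)_1)$ on sets of sizes $2^{i-1}$ and $2^{i+j-2}$ (each with conditional probability $\tfrac14$, contributing $T_{i-1,i+j-2}$). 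A faithful unwinding of the procedure as defined therefore yields
\[
T_{i,i+j}=\tfrac34T_{i-1,i+j-1}+\tfrac14T_{i-1,i+j-2}+\tfrac32,
\]
not the stated recursion: $T_{i-2,i+j-1}$ never arises from these continuations, and $T_{i-1,i+j-2}$ coincides with $T_{i+j-2,i+j-2}$ only when $j=1$. The two expressions genuinely differ; e.g., a direct case analysis of $T_{2,3}=T(4,8)$ gives $57/16$, whereas the stated recursion gives $58/16$. Your parenthetical that ``a call on a singleton versus a size-$2^{i+j-2}$ set produces the $T_{i+j-2,i+j-2}$ term'' is a symptom of the same slip: such a call would produce $T_{0,i+j-2}$.

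So the gap is concrete: you have not derived the only nontrivial identity in the statement, and the derivation you outline produces a different one. The paper's own proof reaches the stated terms by splitting the outcome-$1$ branch into sub-cases that it claims lead to $T_{i-2,i+j-1}$ and $T_{i+j-2,i+j-2}$ with probability $\tfrac18$ each; reconciling those claimed continuations with the sets that $\Pi_2$ actually weighs and recurses on is exactly the bookkeeping you flagged as delicate and then skipped. To close the argument you would need either to exhibit a reading of $\Pi_2$ under which the recursive calls really land on sets of sizes $2^{i-2}\times 2^{i+j-1}$ and $2^{i+j-2}\times 2^{i+j-2}$, or to prove the identity $2T_{i-1,i+j-2}=T_{i-2,i+j-1}+T_{i+j-2,i+j-2}$, which is false in general (compare $T_{1,1}=\tfrac32$ with $\tfrac12 T_{0,2}+\tfrac12T_{1,1}=\tfrac74$). ``Everything else is routine arithmetic'' understates the situation: the routine arithmetic is precisely what exposes the mismatch.
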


For any $0\leq \Delta\leq n-1$, define $\Delta_n = \frac{n}{2}-\left|\Delta-\frac{n}{2}\right|$. For simplifying the notation, let
\[
q_{\Delta,i} \triangleq 
\begin{cases}
\frac{2^{i-1}\Delta_n}{2^{l-1}-2^{i-1}\Delta_n}, & \quad \Delta_n < 2^{l-(i+1)}, 2^{l-(i+1)} \geq 1,\\
1, & \quad \Delta_n \geq 2^{l-(i+1)}, 2^{l-(i+1)} \geq 1,\\
0, &\quad \text{otherwise},
\end{cases}
\] for all $1\leq i<l$, and 
\[   
q_{\Delta,0} \triangleq 
\begin{cases}
\frac{\Delta_n}{2^{l-1}}, & \quad \Delta_n < 2^{l-1},\\
1, &\quad \text{otherwise}.\\
\end{cases}
\] Also, let $q_{\Delta,l}\triangleq 1$. Moreover, let
\[
p_{\Delta,j} \triangleq 
\begin{cases}
\frac{2^{l-1}-2^j\Delta_n}{2^{l-1}-2^{j-1}\Delta_n}, & \quad \Delta_n < 2^{l-1}, 2^{l-1}\geq 1,\\
0, &\quad \text{otherwise},\\
\end{cases}
\] for all $1\leq j<l$, and 
\[
p_{\Delta,0} \triangleq 
\begin{cases}
\frac{2^{l-1}-\Delta_n}{2^{l-1}} &\quad \Delta_n < 2^{l-1}, 2^{l-1}\geq 1\\
0 & \quad \text{otherwise}.\\
\end{cases}
\] 

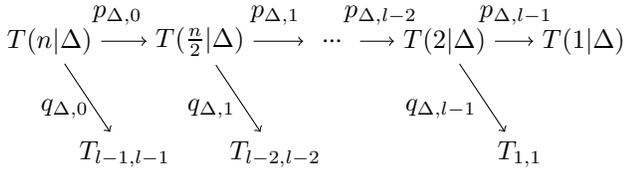
\begin{figure}
\centering
\begin{tikzpicture}[node distance=2cm]
\node (A) at (0, 0) {$T(n\vert\Delta)$};
\node (B) at (2, 0) {$T(\frac{n}{2}\vert\Delta)$};
\node (C) at (1,-1.5) {$T_{l-1,l-1}$};
\node (E) at (3,-1.5) {$T_{l-2,l-2}$};
\node (G) at (3.75,0) {~...~~};
\node (H) at (5.25,0) {$T(2\vert\Delta)$};
\node (I) at (7.1,0) {$T(1\vert\Delta)$};
\node (J) at (6.25,-1.5) {$T_{1,1}$};
\node (K) at (0.9, 0.35) {$p_{\Delta,0}$};
\node (L) at (0.2, -0.9) {$q_{\Delta,0}$};
\node (M) at (3, 0.35) {$p_{\Delta,1}$};
\node (N) at (2.15, -0.9) {$q_{\Delta,1}$};
\node (O) at (4.4, 0.35) {$p_{\Delta,l-2}$};
\node (P) at (5.2, -0.9) {$q_{\Delta,l-1}$};
\node (Q) at (6.2, 0.35) {$p_{\Delta,l-1}$};
\draw[->]
  (A) edge (C) (A) edge (B) (B) edge (G) (B) edge (E) (G) edge (H) (H) edge (I) (H) edge (J);
\end{tikzpicture}
\caption{Recursive form of $T(n\vert\Delta)$.} \label{fig:fig2}
\end{figure}

The following lemma is useful for computing $T(n|\Delta)$ based on the values of $T_{i,j}$. 

\begin{lemma}\label{lem:TDelta}
For any $0\leq \Delta\leq n-1$, we have
\begin{equation*}\label{eq:TDelta}
T(n|\Delta)=\sum_{i=0}^{l-1}m_{\Delta,i}(T_{l-i-1,l-i-1}+i+1)+m_{\Delta,l}l
\end{equation*}
where $m_{\Delta,0}=q_{\Delta,0}$, and \[m_{\Delta,i}=q_{\Delta,i}\prod_{j=0}^{i-1}p_{\Delta,j}\] for all $1\leq i\leq l$.
\end{lemma}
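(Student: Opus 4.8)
\textbf{Proof plan for Lemma~\ref{lem:TDelta}.}

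The plan is to unwind the recursion tree depicted in Figure~\ref{fig:fig2}, in which the procedure $\Pi_0$ applied to a set of size $n$ carrying a Type-II configuration with separation parameter $\Delta$ successively bisects. First I would identify what happens at the root: starting from $\Pi_0(N)$, the weighing of $N_1$ returns $1$ (because the configuration is Type-II, so no single coin carries all the weight, and $N_1$ has half the coins — the result is $0$ or $2$ only in the Type-I case, which is excluded here, while $w(N_1)\in\{0,1,2\}$ and both unit-weight coins landing in $N_1$ or both in $N_2$ corresponds to $w(N_1)\in\{0,2\}$). So I must be careful: with probability related to whether the two unit coins are ``split'' by the bisection or not, $\Pi_0$ either recurses into a half of size $n/2$ still carrying a Type-II sub-configuration (this is the event governed by $p_{\Delta,0}$), or it transitions to $\Pi_1(N_1,N_2)$ because the two coins were separated (governed by $q_{\Delta,0}$). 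The combinatorial content of the first step is therefore to verify that, conditioned on separation parameter $\Delta$, the probability that the two unit coins lie in the same half of $N$ equals $p_{\Delta,0}=\frac{2^{l-1}-\Delta_n}{2^{l-1}}$ and the complementary probability of being split equals $q_{\Delta,0}=\frac{\Delta_n}{2^{l-1}}$; this is where the folded quantity $\Delta_n=\frac{n}{2}-|\Delta-\frac{n}{2}|$ enters, since under the cyclic/linear indexing the number of split-inducing placements is controlled by $\min(\Delta, n-\Delta)$ up to the relevant normalization.

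Next I would set up the induction down the ``spine'' of the tree. At depth $i$ (for $1\le i\le l-1$), conditioned on the two coins still being together and on the current value of $\Delta$, the strategy is running $\Pi_0$ on a set of size $n/2^i = 2^{l-i}$; one more weighing is spent, and then with conditional probability $p_{\Delta,i}$ the coins remain together (recurse to depth $i+1$) and with conditional probability $q_{\Delta,i}$ they get split, at which point the strategy enters $\Pi_1(S_1,S_2)$ on two sets each of size $2^{l-i-1}$. The key claim to verify here is that once $\Pi_1$ is entered with the two unit coins in separate halves each of size $2^{l-i-1}=2^{l-(i+1)}$, the expected number of additional weighings is exactly $T_{l-i-1,l-i-1}$ — this is just the definition of $T(a,b)$ with $a=b=2^{l-i-1}$, i.e.\ $T_{l-i-1,l-i-1}$, since $\Pi_1$ is precisely the procedure analyzed by $T(a,b)$. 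Summing the cost along the spine: reaching depth $i$ and then splitting contributes the prefix product $\prod_{j=0}^{i-1}p_{\Delta,j}$ times $q_{\Delta,i}$ — which is exactly $m_{\Delta,i}$ — multiplied by the remaining cost $T_{l-i-1,l-i-1}$ plus the $i+1$ weighings already spent (the $i$ weighings of $\Pi_0$ along the spine plus the one weighing inside $\Pi_1$ that first separates... — here I would double-check the exact bookkeeping of the ``$+i+1$'' term against the procedure descriptions, counting: $i+1$ weighings of $\Pi_0$ to get from size $n$ down to the level where the split is detected, after which the $T_{l-i-1,l-i-1}$ term already accounts for everything $\Pi_1$ does). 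The terminal term $m_{\Delta,l}\,l$ handles the event that the coins are never separated by any $\Pi_0$-bisection — which can only happen if $\Delta\in\{1, n-1\}$ so that the two coins are adjacent and get carried all the way down to a pair $\{i\}\cup\{i+1\}$ — in which case $\Pi_0$ performs a full chain of $l$ bisecting weighings with no extra cost, consistent with $T_{0,0}=0$; I would confirm $m_{\Delta,l}$ is nonzero exactly in that boundary case and equals the probability of it.

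The main obstacle I anticipate is the careful verification of the conditional split/no-split probabilities $p_{\Delta,i}$ and $q_{\Delta,i}$ at each level, and in particular reconciling the two regimes in their case definitions (the ``$\Delta_n < 2^{l-(i+1)}$'' versus ``$\Delta_n \ge 2^{l-(i+1)}$'' branches, where in the latter the coins are \emph{certainly} split at that level so $q_{\Delta,i}=1$). The point is that as we descend, the relevant half-width $2^{l-i-1}$ shrinks while $\Delta$ (the actual index gap) stays fixed, so eventually the gap exceeds the half and separation becomes forced; tracking how $\Delta_n$ — rather than raw $\Delta$ — is the right invariant through the conditioning (because after the first non-split step we restrict to a sub-block and the ``effective gap'' folds) is the delicate combinatorial bookkeeping. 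Once those per-level probabilities are pinned down, the telescoping of the prefix products into $m_{\Delta,i}$ and the linearity-of-expectation sum over the $l$ possible split-depths (plus the terminal never-split term) is routine, and matches the stated formula. I would also note that when $2^{l-(i+1)}<1$, i.e.\ past the bottom, the ``otherwise'' case makes $q_{\Delta,i}=0$ and $p$-products vanish appropriately, so no spurious terms appear — a sanity check worth doing explicitly for small $l$ such as $l=1,2$ against Lemma~\ref{lem:Tii}.
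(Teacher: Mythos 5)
Your plan follows essentially the same route as the paper's own proof: condition on the depth $i$ at which the $\Pi_0$ bisection first yields outcome $1$ (the two unit coins get separated), so that the prefix product $q_{\Delta,i}\prod_{j=0}^{i-1}p_{\Delta,j}=m_{\Delta,i}$ multiplies the $i+1$ weighings already spent plus the expected cost $T_{l-i-1,l-i-1}$ of the ensuing $\Pi_1$ call on two halves of size $2^{l-i-1}$, with the residual term $m_{\Delta,l}\,l$ for the never-split path. The paper's argument is in fact terser than your plan---it simply declares $p_{\Delta,i}$ and $q_{\Delta,i}$ to be the probabilities of the respective weighing outcomes and telescopes along Fig.~\ref{fig:fig2}, without carrying out the combinatorial verification of the closed forms that you correctly flag as the main remaining work---so your reading is faithful to, and no less complete than, the published proof.
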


By combining the results of Lemmas~\ref{lem:Tii}--\ref{lem:TDelta}, we can compute $T_{\mathrm{ave}}(n)$ for the proposed strategy as follows. 

\begin{theorem}
For the proposed strategy, $T_{\mathrm{ave}}(n)$ can be computed as $T_{\mathrm{ave}}(n)=P M I$, where $P = [P_0,\dots,P_{n-1}]$ is a row vector of length $n$, where $P_{\Delta}=\frac{2(n-\Delta)}{n(n+1)}$ for all $0\leq \Delta\leq n-1$; and $I = [I_1,\dots,I_{l},l]^{\top}$ is a column vector of length $l+1$, where $I_i = T_{l-i,l-i}+i$ for all $1\leq i\leq l$; and ${M = (m_{\Delta,i})_{0\leq \Delta\leq n-1, 0\leq i\leq l}}$ is an $n\times (l+1)$ matrix, where $\{m_{\Delta,i}\}$ are defined in Lemma~\ref{lem:TDelta}.
\end{theorem}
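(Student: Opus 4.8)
The plan is to express $T_{\mathrm{ave}}(n)$ by conditioning on the type and the representative value $\Delta$ of the weight configuration, and then recognize that the resulting weighted sum is exactly the matrix product $PMI$. First I would recall from Section~\ref{sec:SN} that all configurations are equally probable, and count them: there are $n$ Type-I configurations and $\binom{n}{2}$ Type-II configurations, for a total of $n + \binom{n}{2} = \frac{n(n+1)}{2}$ configurations. Conditioning on the representative function $\Delta$, note that $\Delta = 0$ corresponds to all $n$ Type-I configurations (so the number of configurations with $\Delta = 0$ is $n$, by the definition $\Delta(\{w_i\}_{i\in S})=0$ for Type-I), while each value $\Delta \in \{1,\dots,n-1\}$ corresponds to the $n-\Delta$ Type-II configurations with $w_i = w_j = 1$ and $|i-j| = \Delta$. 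Hence the probability of drawing a configuration with representative $\Delta$ is $\frac{n}{n(n+1)/2} = \frac{2}{n+1}$ for $\Delta=0$ and $\frac{2(n-\Delta)}{n(n+1)}$ for $\Delta \geq 1$; one checks these sum to $1$. A convenient observation is that the $\Delta=0$ formula $P_0 = \frac{2(n-0)}{n(n+1)} = \frac{2}{n+1}$ coincides with the general expression $P_\Delta = \frac{2(n-\Delta)}{n(n+1)}$, so the single vector $P$ captures all cases uniformly.

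Next I would invoke the definition of $T(n|\Delta)$ as the expected number of weighings conditioned on the representative value being $\Delta$ — noting that for $\Delta \geq 1$ this is an average over the $n-\Delta$ equiprobable Type-II configurations at distance $\Delta$, and for $\Delta = 0$ it is the average over the $n$ Type-I configurations. By the law of total expectation,
\[
T_{\mathrm{ave}}(n) = \sum_{\Delta=0}^{n-1} P_\Delta \, T(n|\Delta).
\]
Now I would substitute the closed form for $T(n|\Delta)$ supplied by Lemma~\ref{lem:TDelta}, namely $T(n|\Delta) = \sum_{i=0}^{l-1} m_{\Delta,i}(T_{l-i-1,l-i-1} + i + 1) + m_{\Delta,l}\, l$. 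Reindexing the sum over $i$ from $0,\dots,l-1$ to $i' = i+1$ running over $1,\dots,l$, the generic term becomes $m_{\Delta,i'-1}(T_{l-i',l-i'} + i')$; together with the last term $m_{\Delta,l}\,l$ this is precisely $\sum_{i=1}^{l} m_{\Delta,i-1} I_i + m_{\Delta,l}\, l$ where $I_i = T_{l-i,l-i} + i$. This is where I must be careful about the indexing convention relating the matrix $M = (m_{\Delta,i})$ to the vector $I = [I_1,\dots,I_l,l]^\top$: the $i$-th entry of $I$ (for $1 \le i \le l$) multiplies the column $m_{\Delta,i-1}$, and the last entry $l$ multiplies column $m_{\Delta,l}$, so that the $(\Delta+1)$-th entry of $MI$ equals exactly $T(n|\Delta)$. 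Verifying that this shift matches the stated dimensions ($M$ is $n \times (l+1)$ with columns indexed $0,\dots,l$, and $I$ has length $l+1$) is the only genuinely fiddly bookkeeping step.

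Assembling the pieces, $T_{\mathrm{ave}}(n) = \sum_{\Delta=0}^{n-1} P_\Delta (MI)_{\Delta} = P(MI) = PMI$, where associativity of matrix multiplication closes the argument; I would also confirm the dimensions are conformable: $P$ is $1 \times n$, $M$ is $n \times (l+1)$, $I$ is $(l+1) \times 1$, giving a scalar. The main obstacle I anticipate is purely notational rather than mathematical: pinning down that the column index of $M$ used in Lemma~\ref{lem:TDelta}'s expansion is off by one from the row index of $I$, and confirming that the value $T(n|0)$ — which is an average over Type-I configurations, a regime where the $q_{\Delta,i}$ and $p_{\Delta,j}$ probabilities degenerate since $\Delta_n = 0$ — is still correctly produced by the same formula (here one checks $m_{0,0} = q_{0,0} = 0/2^{l-1} = 0$ forces the $\Delta=0$ row to propagate all the way to $m_{0,l}=1$, yielding $T(n|0) = l = \log_2 n$, which matches the bisecting behavior of $\Pi_0$ on Type-I inputs). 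Once those edge cases and the index shift are checked, the theorem follows immediately by linearity.
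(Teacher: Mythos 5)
Your proposal is correct and follows essentially the same route as the paper: count the $n-\Delta$ configurations with representative $\Delta$ against the $\frac{n(n+1)}{2}$ total to get $P_\Delta$, write $T_{\mathrm{ave}}(n)=\sum_{\Delta=0}^{n-1}P_\Delta T(n|\Delta)$ by total expectation, and substitute Lemma~\ref{lem:TDelta} to read off the matrix form. Your extra checks (the index shift between $M$ and $I$, and the degenerate $\Delta=0$ row giving $T(n|0)=l$) are sound details the paper leaves implicit.
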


\begin{proof}
Fix an arbitrary $0\leq \Delta\leq n-1$. It is easy to verify that there exist $n-\Delta$ distinct configurations $\{w_i\}$ such that $\Delta(\{w_i\})=\Delta$. Also, the total number of possible configurations are $n+\binom{n}{2}=\frac{n(n+1)}{2}$. Thus, for a randomly chosen configuration $\{w_i\}$, the probability that $\Delta(\{w_i\})=\Delta$ is equal to $P_{\Delta}=\frac{2(n-\Delta)}{n(n+1)}$. Then, it is easy to see that $T_{\mathrm{ave}}(n)=\sum_{\Delta=0}^{n-1} P_{\Delta}T(n|\Delta)$. Re-writing this equation in matrix form by using the result of Lemma~\ref{lem:TDelta}, the result of the theorem follows immediately.
\end{proof}

\subsection{Worst-Case Setting}

\begin{theorem}
For the proposed strategy, we have ${T_{\mathrm{max}}(n)=2\log_2 n-1}$.
\end{theorem}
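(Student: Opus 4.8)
The plan is to establish exact worst-case counts for each of the three procedures $\Pi_0,\Pi_1,\Pi_2$ by one simultaneous induction on the sizes of their input sets, and then specialize to the initial call $\Pi_0(N)$ with $w(N)=2$. Since $n=2^l$, let $g_0^{(w)}(m)$ denote the worst case, over admissible sub-configurations, of the number of weighings performed by $\Pi_0(S)$ when $|S|=2^m$ and $w(S)=w\in\{1,2\}$; let $g_1(a,b)$ be the analogous quantity for $\Pi_1(A,B)$ with $|A|=2^a$, $|B|=2^b$ and $w(A)=w(B)=1$; and let $g_2(a,b)$ be the analogue for $\Pi_2(A,B)$ under its stated precondition $w(A)=w(B)=w(A_1\cup B_1)=1$. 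Because the strategy starts from $\Pi_0(N)$, the theorem is exactly the statement $g_0^{(2)}(l)=2l-1$, so all the work is to pin down these four functions.

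First I would read each procedure's case analysis off its definition and turn it into a recursion, at every recursive call checking that the precondition of the invoked procedure holds (so the inductive hypothesis applies) and deleting branches that the incoming precondition makes infeasible. For $\Pi_0$ with $w(S)=1$, only $w(S_1)\in\{0,1\}$ can occur and both branches recurse into a weight-$1$ call of $\Pi_0$ on $2^{m-1}$ coins, so $g_0^{(1)}(m)=1+g_0^{(1)}(m-1)$ with $g_0^{(1)}(0)=0$, hence $g_0^{(1)}(m)=m$. For $\Pi_0$ with $w(S)=2$, the branch $w(S_1)=1$ hands off to $\Pi_1(S_1,S_2)$ with $w(S_1)=w(S_2)=1$ while the other two branches recurse into a weight-$2$ call of $\Pi_0$ on $2^{m-1}$ coins, so $g_0^{(2)}(m)=1+\max\{g_0^{(2)}(m-1),g_1(m-1,m-1)\}$ with $g_0^{(2)}(0)=0$. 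For $\Pi_1$, the singleton boundary case yields $g_1(0,b)=g_0^{(1)}(b)=b$, and for $a,b\ge1$ the probe on $A_1\cup B_1$ leads either to $\Pi_1(A_2,B_2)$ or $\Pi_1(A_1,B_1)$ (both with parameters $(a-1,b-1)$ and both weight-$1$/weight-$1$) or to $\Pi_2(A,B)$, giving $g_1(a,b)=1+\max\{g_1(a-1,b-1),g_2(a,b)\}$. For $\Pi_2$ with $a\le b$ (the case $b<a$ is symmetric, and $g_1$ is symmetric), the base case $\max(|A|,|B|)=2$ gives $g_2(1,1)=1$; for $\max(|A|,|B|)>2$, the probe on $A_1\cup(B_2)_1$ combined with $w(A)=w(B)=w(A_1\cup B_1)=1$ can only return $0$ (forcing $\Pi_1(A_2,B_1)$, parameters $(a-1,b-1)$), $1$ (forcing $\Pi_1(A_1,(B_2)_2)$, parameters $(a-1,b-2)$, since the alternative split $w(A_1)=0$, $w((B_2)_1)=1$ contradicts the precondition), or $2$ (forcing $\Pi_1(A_1,(B_2)_1)$, parameters $(a-1,b-2)$), in every case with both sets of weight $1$; hence $g_2(a,b)=1+\max\{g_1(a-1,b-1),g_1(a-1,b-2)\}$.

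Next I would close the induction with the guesses $g_0^{(1)}(m)=m$, $g_1(a,b)=a+b$, $g_2(a,b)=a+b-1$ (for $a,b\ge1$), and $g_0^{(2)}(m)=2m-1$ for $m\ge1$ (with $g_0^{(2)}(0)=0$); each verification is a one-line substitution into the recursions above, namely $g_2(a,b)=1+\max\{a+b-2,a+b-3\}=a+b-1$, then $g_1(a,b)=1+\max\{a+b-2,a+b-1\}=a+b$, and finally $g_0^{(2)}(m)=1+\max\{2m-3,2(m-1)\}=2m-1$ for $m\ge2$ (with $g_0^{(2)}(1)=1$ checked directly), the term $g_1(m-1,m-1)=2(m-1)$ dominating throughout. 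Setting $m=l$ gives $T_{\mathrm{max}}(n)=g_0^{(2)}(l)=2l-1=2\log_2 n-1$. The matching lower bound — that these maxima are attained — follows because every branch used above is realized by some admissible configuration; concretely, a Type-II configuration whose two unit-weight coins lie in opposite halves of $N$ and remain \emph{misaligned} under each subsequent halving in $\Pi_1$ and $\Pi_2$ forces precisely $2l-1$ weighings.

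The main obstacle is bookkeeping rather than depth: one must correctly propagate the weight preconditions through the three mutually recursive procedures so that the inductive formulas apply, and in particular must invoke the $\Pi_2$ precondition to discard the infeasible branch of the probe on $A_1\cup(B_2)_1$, while also handling the boundary cases ($|A|=1$ or $|B|=1$ in $\Pi_1$, and $\max(|A|,|B|)=2$ versus $>2$ in $\Pi_2$). Once the recursions are stated correctly, the remaining arithmetic is immediate.
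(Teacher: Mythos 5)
Your proof is correct and follows essentially the same route as the paper: both set up the worst-case recursions induced by $\Pi_0,\Pi_1,\Pi_2$ and identify the dominant branch ($\Pi_0$ falling into $\Pi_1(S_1,S_2)$, then $\Pi_1\to\Pi_2\to\Pi_1$ costing two weighings per halving), yielding $T^{\star}_2(2^i)=T^{\star}_2(2^{i-1})+2$ with $T^{\star}_2(2)=1$. Your version is somewhat more self-contained than the paper's figure-based argument, since you derive closed forms $g_1(a,b)=a+b$ and $g_2(a,b)=a+b-1$ for all argument pairs and explicitly exhibit an achieving configuration, but the underlying decomposition is the same.
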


\begin{proof}
First, we prove that $T_{\mathrm{max}}(n)=T^{\star}_{2}(n)={T^{\star}_{2}(\frac{n}{2})+2}$. It is easy to verify that $T^{\star}(n|\Delta)$ and $T^{\star}_{i,i}$ can be computed recursively similar to $T(n|\Delta)$ and $T_{i,i}$, respectively, as shown in Fig.~\ref{fig:fig2} and Fig.~\ref{fig:fig3}, by replacing $T$ with $T^{\star}$ everywhere. As can be seen in Fig. \ref{fig:fig2}, the straight lines correspond to the cases in which one weighing resolves the weights of half of the coins; whereas, the diagonal lines correspond to the cases in which the weight of none of the coins is determined. That is, the diagonal lines correspond to the cases that require more number of weighings. Moreover, from $T^{\star}(n|\Delta)$ to $T^{\star}(\frac{n}{4},\frac{n}{4})$, there are two ways (see Fig. \ref{fig:fig3}); one way is through $T^{\star}(\frac{n}{2}|\Delta)$ which requires two weighings, and the other way is through $T^{\star}(\frac{n}{2},\frac{n}{2})$ which requires, in worst case, three weighings, noting that $T^{\star}(\frac{n}{2},\frac{n}{2})=T^{\star}(\frac{n}{4},\frac{n}{4})+2$. Thus, among the diagonal lines, the first one, reaching to $T^{\star}(\frac{n}{2},\frac{n}{2})$, yields the maximum number of required weighings. By these arguments, $T_2^{\star}(n)=T^{\star}(\frac{n}{2},\frac{n}{2})+1$. Similarly, it can be shown that $T_2^{\star}(\frac{n}{2})=T^{\star}(\frac{n}{4},\frac{n}{4})+1$. Thus, $T_2^{\star}(n)=T_2^{\star}(\frac{n}{2})+2$. More generally, we can write the recursive formula $T_2^{\star}(2^i)=T_2^{\star}(2^{i-1})+2$ for all $1<i\leq l$. Noting that $T_2^{\star}(2)=1$, by solving the above recursion, we have $T^{\star}_2(n)=2\log_2 n-1$.    
\end{proof}

\begin{figure}
\vspace{-0.15cm}
\centering
\begin{tikzpicture}[node distance=2cm]
\node (A) at (0, 0) {$T_{i,i}$};
\node (B) at (2.25, 0) {$$};
\node (C) at (1,-1.5) {$T_{i-1,i-1}$};
\node (E) at (3.5,-1.5) {$T_{i-1,i-1}$};
\node (G) at (4.6,0) {$T_{i-2,i-1}$};
\node (K) at (1.2, 0.35) {$\frac{1}{2}$};
\node (L) at (0.3, -0.85) {$\frac{1}{2}$};
\node (M) at (3.2, 0.35) {$\frac{1}{2}$};
\node (N) at (2.7, -0.85) {$\frac{1}{2}$};
\draw[->]
  (A) edge (C) (A) edge (B) (B) edge (G) (B) edge (E) ;
\end{tikzpicture}
\caption{ Recursive form of $T_{i,i}$.} \label{fig:fig3}
\end{figure}
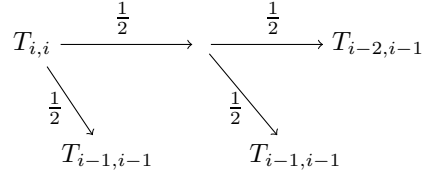

\section{Optimal Nested Weighing Strategy}\label{sec:ONWS}
In a nested strategy, followed by weighing a subset $S$ of coins, if the weight of some coin(s) in $S$ remains undetermined, the next weighing must be performed on a proper subset of $S$. Moreover, if there are multiple such subsets $S$, this procedure must be performed separately for each $S$.  

\subsection{Average-Case Setting}
For any collection $S$ of coins, denote by $d(S)$ the number of coins in $S$ with non-zero weight. For any $1\leq s\leq n$, $w\in \{1,2\}$, and $d\in \{1,2\}$, denote by $\Psi_d(s,w)$ an optimal nested strategy for all collections $S$ of coins, each with an unknown weight in the set $\{0,1,2\}$, such that $|S|=s$, $w(S)= w$, and $d(S)=d$. That is, the expected number of weighings required by the strategy $\Psi_d(s,w)$ over all such $S$ (for any given $s$, $w$, and $d$) is minimum, among all possible nested strategies. Similarly, define $\Psi(s,w)$ as $\Psi_d(s,w)$, except when the expectation is taken over all $S$ such that $|S|=s$ and $w(S)=w$, and define the strategy $\Psi$ as $\{\Psi(s,w)\}_{1\leq s\leq n, 1\leq w\leq 2}$. We wish to design the strategy $\Psi$ and analyze $T^{\Psi}_{\mathrm{ave}}(n)$.


Take an arbitrary collection $S$ of coins such that $|S|=s$, $w(S)=w$, and $d(S)=d$. Consider the application of a nested strategy, represented by $\Psi^{m}_d(s,w)$, on $S$ as follows. The strategy $\Psi^{m}_d(s,w)$ begins with weighing an arbitrary subset $R$ of coins in $S$ of size $1\leq m\leq |S|-1$. If $w(R)=0$ or $w(R)=2$, the strategy $\Psi^{m}_d(s,w)$ proceeds with applying the strategy $\Psi_d(s-m,w)$ on $S\setminus R$, or the strategy $\Psi_d(m,w)$ on $R$, respectively. Otherwise, the strategy $\Psi^{m}_d(s,w)$ applies the strategies $\Psi_d(m,1)$ and $\Psi_d(s-m,1)$ on $R$ and $S\setminus R$, respectively. Denote by $T^{m}_{w,d}(s)$ the expected number of weighings required by the strategy $\Psi^m_d(s,w)$ over all such $S$, and let $T^{\mathrm{opt}}_{w,d}(s)\triangleq \min_{1\leq m\leq s-1} T^{m}_{w,d}(s)$. Similarly, define the strategy $\Psi^m(s,w)$ the same as $\Psi^m_d(s,w)$, except when $\Psi_d$ is replaced by $\Psi$ everywhere. Denote by $T^{m}_{w}(s)$ the expected number of weighings required by the strategy $\Psi^m(s,w)$ over all $S$ such that $|S|=s$ and $w(S)=w$, and let $T^{\mathrm{opt}}_{w}(s)\triangleq \min_{1\leq m\leq s-1} T^{m}_{w}(s)$. A simple recursive argument yields that for the strategy $\Psi$ defined earlier, we have $T^{\Psi}_{\mathrm{ave}}(n) = T^{\mathrm{opt}}_2(n)$. 


For the ease of notation, for any $2\leq s\leq n$ and ${1\leq m\leq s-1}$, we define $\alpha_{i,j}(s,m)\triangleq {\binom{s-i}{m-j}}/{\binom{s}{m}}$ for all $i,j$ such that $0\leq m-j\leq s-i$, and define $\alpha_{i,j}(s,m)\triangleq 0$, otherwise. For brevity, we simply refer to $\alpha_{i,j}(s,m)$ by $\alpha_{i,j}$ whenever $s$ and $m$ are clear from the context. 

Based on the above definitions, the following results can be shown. 

\begin{lemma}\label{lem:Tmw}
For any $2\leq s\leq n$ and $1\leq m\leq s-1$, we have 
\begin{dmath*}\label{eq:ET1}
T^{m}_{1}(s) =  \alpha_{1,0} (T^{\mathrm{opt}}_{1}(s-m)+1)+\alpha_{1,1} (T^{\mathrm{opt}}_{1}(m)+1),
\end{dmath*} where $T^{\mathrm{opt}}_1(1)=0$. Moreover, for any $3\leq s\leq n$ and $1\leq m\leq s-1$, we have
\begin{dmath*}\label{eq:ET2}
T^{m}_2(s) = \frac{2}{s+1}T^{m}_{2,1}(s)+\frac{s-1}{s+1} T^{m}_{2,2}(s),
\end{dmath*} 
\begin{dmath*}\label{eq:ET21}
T^{m}_{2,1}(s) = \alpha_{1,0}(T^{\mathrm{opt}}_{2,1}(s-m)+1)+\alpha_{1,1}(T^{\mathrm{opt}}_{2,1}(m)+1), 	
\end{dmath*} and
\begin{dmath*}\label{eq:ET22}
T^{m}_{2,2}(s) = \alpha_{2,0}(T^{\mathrm{opt}}_{2,2}(s-m)+1)+\alpha_{2,2}(T^{\mathrm{opt}}_{2,2}(m)+1)+2\alpha_{2,1}(T^{\mathrm{opt}}_{1}(m)+T^{\mathrm{opt}}_1(s-m)+1), 	
\end{dmath*} where $T^{\mathrm{opt}}_{2,1}(1) = T^{\mathrm{opt}}_{2,2}(1)=0$, and $T^{\mathrm{opt}}_{2,1}(2)=T^{\mathrm{opt}}_{2,2}(2)=1$. 
\end{lemma}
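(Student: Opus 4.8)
The plan is to derive each of the four displayed recursions by conditioning on the outcome of the first weighing, which (by definition of the strategy $\Psi^m_d(s,w)$ or $\Psi^m(s,w)$) is performed on an arbitrary subset $R\subseteq S$ with $|R|=m$. The only quantity that matters about $R$ is how many of the nonzero coins of $S$ fall into it, and since $R$ is an arbitrary fixed set while the weight configuration is uniformly random over all valid configurations with the prescribed parameters, this is equivalent to choosing $R$ uniformly at random among the $m$-subsets of $S$. Hence the probability that $j$ of the $d$ nonzero coins (counted with multiplicity for the Type-I case) land in $R$ is exactly the hypergeometric weight $\alpha_{i,j}(s,m)=\binom{s-i}{m-j}/\binom{s}{m}$, with $i$ the total number of nonzero \emph{positions}; I would first record this combinatorial fact cleanly, since every line of the lemma is an instance of it.

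Next I would treat the three ``atomic'' cases $T^m_1(s)$, $T^m_{2,1}(s)$, $T^m_{2,2}(s)$ one at a time. For $T^m_1(s)$ (here $w(S)=1$, so there is a single coin of weight $1$): with probability $\alpha_{1,0}$ that coin is outside $R$, so $w(R)=0$, and by the definition of the nested strategy we recurse with $\Psi_1(s-m,1)$ on $S\setminus R$, contributing $T^{\mathrm{opt}}_1(s-m)+1$; with probability $\alpha_{1,1}$ the coin is in $R$, so $w(R)=1$ and we recurse on $R$, contributing $T^{\mathrm{opt}}_1(m)+1$; the case $w(R)=2$ is impossible. This gives the first display, with base case $T^{\mathrm{opt}}_1(1)=0$ because a single coin of known total weight needs no weighing. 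The case $T^m_{2,1}(s)$ is structurally identical: $d(S)=1$ means one coin carries all of the weight $2$, so again $w(R)\in\{0,2\}$ with probabilities $\alpha_{1,0},\alpha_{1,1}$ and $w(R)=1$ is impossible; when $w(R)=2$ we recurse on $R$ with $\Psi_{2,1}(m,2)$, when $w(R)=0$ we recurse on $S\setminus R$. For $T^m_{2,2}(s)$ we have two distinct coins of weight $1$; the number landing in $R$ is $0$, $1$, or $2$ with probabilities $\alpha_{2,0}$, $2\alpha_{2,1}$, $\alpha_{2,2}$. If both miss $R$ (prob.\ $\alpha_{2,0}$) then $w(R)=0$ and we recurse with $\Psi_{2,2}(s-m,2)$; if both are in $R$ (prob.\ $\alpha_{2,2}$) then $w(R)=2$ and we recurse with $\Psi_{2,2}(m,2)$; if exactly one is in $R$ (prob.\ $2\alpha_{2,1}$) then $w(R)=1$, so $S\setminus R$ also has weight $1$ with one nonzero coin, and the strategy recurses with $\Psi_1(m,1)$ on $R$ \emph{and} $\Psi_1(s-m,1)$ on $S\setminus R$, contributing $T^{\mathrm{opt}}_1(m)+T^{\mathrm{opt}}_1(s-m)+1$. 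Adding the three weighted terms and the ``$+1$'' for the first weighing yields the third display; the base cases $T^{\mathrm{opt}}_{2,1}(1)=T^{\mathrm{opt}}_{2,2}(1)=0$ and $T^{\mathrm{opt}}_{2,1}(2)=T^{\mathrm{opt}}_{2,2}(2)=1$ are immediate (two coins summing to a known total are resolved by a single weighing of one of them).

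Finally, the identity $T^m_2(s)=\tfrac{2}{s+1}T^m_{2,1}(s)+\tfrac{s-1}{s+1}T^m_{2,2}(s)$ follows by the law of total expectation over the value of $d(S)$: conditioned on $w(S)=2$ and $|S|=s$, there are $s$ configurations with $d(S)=1$ (Type-I) and $\binom{s}{2}$ with $d(S)=2$ (Type-II), so $d(S)=1$ with probability $s/(s+\binom{s}{2})=2/(s+1)$ and $d(S)=2$ with probability $(s-1)/(s+1)$; since the subset $R$ and the number $m$ are fixed in advance, $\Psi^m(s,2)$ restricted to each class behaves exactly as $\Psi^m_1(s,2)$ or $\Psi^m_2(s,2)$, respectively, which gives the claimed mixture. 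I do not anticipate a serious obstacle here; the one point that needs care — and which I would state explicitly — is the justification that averaging over configurations with $R$ fixed equals averaging over $R$ with the configuration fixed, i.e.\ that the relevant conditional distribution of ``number of nonzero coins in $R$'' is hypergeometric, and that this hypergeometric count is precisely what $\alpha_{i,j}(s,m)$ encodes. The rest is bookkeeping of which recursive sub-strategy each outcome triggers, read directly off the definition of $\Psi^m_d(s,w)$.
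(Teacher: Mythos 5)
Your proposal is correct and follows essentially the same route as the paper's own proof: conditioning on the outcome of the first weighing on the $m$-subset $R$, identifying the hypergeometric probabilities $\alpha_{i,j}$ for how many nonzero coins fall in $R$, and splitting the weight-$2$ case into the Type-I/Type-II mixture with weights $\tfrac{2}{s+1}$ and $\tfrac{s-1}{s+1}$. The only difference is that you make explicit the exchangeability argument (fixed $R$ versus random configuration) that the paper leaves implicit, which is a welcome clarification rather than a deviation.
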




\begin{lemma}\label{lem:MidPoint}
For any $2\leq s\leq n$, we have $\lfloor \frac{s}{2}\rfloor,\lceil\frac{s}{2}\rceil\in \argmin_{1\leq m\leq s-1} T^{m}_1(s)$; for any $3\leq s\leq n$ and $d\in\{1,2\}$, we have $\lfloor \frac{s}{2}\rfloor,\lceil\frac{s}{2}\rceil\in \argmin_{1\leq m\leq s-1} T^{m}_{2,d}(s)$; and for any $3\leq s\leq n$, we have $\lfloor \frac{s}{2}\rfloor,\lceil\frac{s}{2}\rceil\in \argmin_{1\leq m\leq s-1} T^{m}_{2}(s)$.	
\end{lemma}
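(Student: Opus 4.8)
The plan is to prove Lemma~\ref{lem:MidPoint} by showing that each of the functions $m\mapsto T^{m}_1(s)$, $m\mapsto T^{m}_{2,1}(s)$, $m\mapsto T^{m}_{2,2}(s)$ is minimized at the balanced split $m=\lfloor s/2\rfloor$ (equivalently $m=\lceil s/2\rceil$ by the obvious $m\leftrightarrow s-m$ symmetry of the recursions in Lemma~\ref{lem:Tmw}), and then to deduce the statement for $T^{m}_2(s)$ from the fact that it is the convex combination $\frac{2}{s+1}T^{m}_{2,1}(s)+\frac{s-1}{s+1}T^{m}_{2,2}(s)$, so a common minimizer of the two summands minimizes the sum. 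Since $T^{m}_1$, $T^{m}_{2,1}$, $T^{m}_{2,2}$ are all defined recursively in terms of the optimal values $T^{\mathrm{opt}}_\bullet(\cdot)$ on strictly smaller set sizes, the natural device is strong induction on $s$: assume for every $s'<s$ that the balanced split achieves $T^{\mathrm{opt}}_\bullet(s')$, and also that $T^{\mathrm{opt}}_\bullet$ is monotone nondecreasing in its argument (a second induction hypothesis I would carry along, since it is needed to compare values at $m$ and $m+1$).

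First I would dispatch the base cases $s=2$ and $s=3$ by direct evaluation of the formulas in Lemma~\ref{lem:Tmw}, where the only admissible splits are $m\in\{1\}$ for $s=2$ and $m\in\{1,2\}$ for $s=3$; in each case $\lfloor s/2\rfloor$ and $\lceil s/2\rceil$ exhaust (or by symmetry tie) the candidates. For the inductive step I would treat $T^{m}_1(s)$ first, since it is the simplest and feeds into the $d=2$ cases. Writing $f_1(m)=T^{m}_1(s)$ with the hypergeometric-type coefficients $\alpha_{1,0}=\frac{s-m}{s}$, $\alpha_{1,1}=\frac{m}{s}$, we have $f_1(m)=1+\frac{s-m}{s}T^{\mathrm{opt}}_1(s-m)+\frac{m}{s}T^{\mathrm{opt}}_1(m)$. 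The clean way to show this is balanced-split-minimal is to compare $f_1(m)$ with $f_1(m+1)$ for $m<\lfloor s/2\rfloor$ and show the difference is $\le 0$, using (i) the inductive claim that $T^{\mathrm{opt}}_1$ at the smaller arguments is realized by a balanced sub-split, giving an explicit two-step recursion, and (ii) monotonicity of $T^{\mathrm{opt}}_1$. A slicker alternative, which I would try first to avoid casework, is a convexity/interchange argument: interpret $(s-m)T^{\mathrm{opt}}_1(s-m)+mT^{\mathrm{opt}}_1(m)$ as $s$ times the expected cost of first isolating a uniformly random $m$-subset, and argue that the expected number of further weighings needed is a Schur-convex function of the partition $(m,s-m)$, so it is minimized when the parts are as equal as possible; combined with the ``$+1$'' for the first weighing this gives the claim. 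Once $T^{m}_1$ is handled, $T^{m}_{2,1}(s)=1+\frac{s-m}{s}T^{\mathrm{opt}}_{2,1}(s-m)+\frac{m}{s}T^{\mathrm{opt}}_{2,1}(m)$ has exactly the same shape and the same argument applies verbatim. For $T^{m}_{2,2}(s)$ the coefficients become $\alpha_{2,0}=\frac{(s-m)(s-m-1)}{s(s-1)}$, $\alpha_{2,2}=\frac{m(m-1)}{s(s-1)}$, $\alpha_{2,1}=\frac{m(s-m)}{s(s-1)}$ (the probabilities, under a uniform random $m$-subset $R$, that $R$ contains $0$, $2$, or exactly $1$ of the two nonzero coins), and the recursion mixes in the already-proven-balanced quantity $T^{\mathrm{opt}}_1$; the same Schur-convexity/pairwise-comparison argument, now over the refined partition and with $T^{\mathrm{opt}}_1$ itself known to be "nice", should again force balance.

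The main obstacle I anticipate is the $T^{m}_{2,2}$ step: here three terms with different combinatorial weights compete, the cross term $2\alpha_{2,1}(T^{\mathrm{opt}}_1(m)+T^{\mathrm{opt}}_1(s-m)+1)$ couples the two halves asymmetrically with the $\alpha_{2,0},\alpha_{2,2}$ terms, and the parities of $s$ and of the sub-sizes produced by balanced splitting interact (e.g.\ $s\equiv 2\pmod 4$ forces an unbalanced split one level down), so a naive discrete-derivative computation balloons into many cases. To keep this under control I would (a) prove and use a one-step monotonicity lemma $T^{\mathrm{opt}}_\bullet(s+1)\ge T^{\mathrm{opt}}_\bullet(s)$ and a ``slow growth'' bound $T^{\mathrm{opt}}_\bullet(s+1)\le T^{\mathrm{opt}}_\bullet(s)+1$ for each of the three functions, both by the same induction, and (b) package the balanced-split optimality as the statement that $g(m):=\mathbb{E}[\text{cost after first weighing}\mid |R|=m]$, expressed through the $\alpha$'s and the $T^{\mathrm{opt}}$'s, is symmetric in $m\leftrightarrow s-m$ and unimodal with the valley at the center, which I would verify by checking $g(m)-g(m+1)\le 0$ on the range $1\le m<s/2$ using only the two monotonicity bounds rather than exact values. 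If the Schur-convexity route proves cleanest, the entire lemma collapses to: each $g$ above is Schur-convex in the composition $(m,s-m)$, the majorization-minimal composition is the most balanced one, hence $m=\lfloor s/2\rfloor$; and then the convex-combination remark finishes $T^{m}_2(s)$ for free.
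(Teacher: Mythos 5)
Your overall skeleton matches the paper's: induction on $s$, the $m\leftrightarrow s-m$ symmetry, reduction of each case to showing that the map $m\mapsto \alpha_{1,0}T^{\mathrm{opt}}(s-m)+\alpha_{1,1}T^{\mathrm{opt}}(m)$ is minimized at the balanced split, and the observation that $T^m_2(s)$ is a positive combination of $T^m_{2,1}(s)$ and $T^m_{2,2}(s)$ so a common minimizer wins. (The paper, like you, only writes out the $T^m_1$ case in full and asserts the others are analogous.) The paper's argument is exactly the ``discrete derivative'' comparison you describe: it reduces the claim to showing that $D(x):=xT^{\mathrm{opt}}_1(x)-(x-1)T^{\mathrm{opt}}_1(x-1)$ is nondecreasing, i.e.\ that $x\mapsto xT^{\mathrm{opt}}_1(x)$ is discretely convex --- which is also precisely the content of your Schur-convexity reformulation.

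The genuine gap is in how you propose to verify that convexity. You say you would check $g(m)-g(m+1)\le 0$ ``using only the two monotonicity bounds rather than exact values,'' namely $T^{\mathrm{opt}}_1(s+1)\ge T^{\mathrm{opt}}_1(s)$ and $T^{\mathrm{opt}}_1(s+1)\le T^{\mathrm{opt}}_1(s)+1$. Those bounds are too weak: they only give $T^{\mathrm{opt}}_1(x)\le D(x)\le T^{\mathrm{opt}}_1(x)+(x-1)$, and since $T^{\mathrm{opt}}_1$ grows only logarithmically this does not force $D(m+1)\le D(s-m)$. Indeed, an arbitrary nondecreasing $1$-Lipschitz sequence need not have $xT(x)$ convex (take $T(1)=T(2)=0$, $T(3)=T(4)=1$: then $D(3)=3>D(4)=1$). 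The paper's proof instead derives the exact increment formula
\begin{equation*}
T^{\mathrm{opt}}_{1}(q)-T^{\mathrm{opt}}_{1}(q-1)=\frac{2^{\lfloor \log (q-1)\rfloor+1}}{q(q-1)},
\end{equation*}
proved by a separate induction that itself invokes the balanced-split hypothesis at smaller sizes, and then a second induction establishing
\begin{equation*}
T^{\mathrm{opt}}_{1}(a+i)-T^{\mathrm{opt}}_{1}(a)\ \ge\ \frac{2^{\lfloor \log a\rfloor+1}}{a}-\frac{2^{\lfloor \log (a+i)\rfloor+1}}{a+i}.
\end{equation*}
The $\Theta(1/q)$ decay of the increments is the real engine of the convexity argument, and nothing in your plan supplies a substitute for it; likewise, asserting Schur-convexity of $(m,s-m)\mapsto (s-m)T^{\mathrm{opt}}_1(s-m)+mT^{\mathrm{opt}}_1(m)$ without such quantitative control is just restating what must be proved. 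Your correct instinct that $T^m_{2,2}$ is the delicate case (three competing weights, the cross term in $T^{\mathrm{opt}}_1$, and parity interactions) also needs an analogous exact-increment analysis, which neither you nor, in fairness, the paper carries out explicitly.
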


\begin{lemma}\label{average-lemma-2}
For any $0\leq i\leq l$, we have 
$T^{\mathrm{opt}}_1(2^i)=i$, and $T^{\mathrm{opt}}_2(2^i) = \frac{(i-1)2^{i+1}+i+2}{2^i+1}$.
\end{lemma}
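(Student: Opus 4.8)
The plan is to prove the two closed-form identities in Lemma~\ref{average-lemma-2} by strong induction on $i$, feeding the recursions of Lemma~\ref{lem:Tmw} with the midpoint choice $m=2^{i-1}$ justified by Lemma~\ref{lem:MidPoint}. First I would dispatch $T^{\mathrm{opt}}_1(2^i)=i$: the base case $T^{\mathrm{opt}}_1(1)=0$ is given, and for $i\geq 1$ with $s=2^i$, $m=s/2=2^{i-1}$, the coefficients are $\alpha_{1,0}=\alpha_{1,1}=\binom{s-1}{m-1}/\binom{s}{m}=m/s=\tfrac12$, so the displayed formula for $T^m_1(s)$ gives $T^{\mathrm{opt}}_1(2^i)=\tfrac12(T^{\mathrm{opt}}_1(2^{i-1})+1)+\tfrac12(T^{\mathrm{opt}}_1(2^{i-1})+1)=T^{\mathrm{opt}}_1(2^{i-1})+1=(i-1)+1=i$ by the inductive hypothesis. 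By Lemma~\ref{lem:MidPoint} the midpoint attains the minimum, so this indeed computes $T^{\mathrm{opt}}_1$.

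Next I would handle $T^{\mathrm{opt}}_2(2^i)$. It is cleaner first to establish a closed form for the auxiliary quantities $T^{\mathrm{opt}}_{2,1}(2^i)$ and $T^{\mathrm{opt}}_{2,2}(2^i)$, then combine them via $T^m_2(s)=\tfrac{2}{s+1}T^m_{2,1}(s)+\tfrac{s-1}{s+1}T^m_{2,2}(s)$ with $s=2^i$, $m=2^{i-1}$. For $T_{2,1}$: with $m=s/2$ one again gets $\alpha_{1,0}=\alpha_{1,1}=\tfrac12$, so $T^{\mathrm{opt}}_{2,1}(2^i)=T^{\mathrm{opt}}_{2,1}(2^{i-1})+1$, and with $T^{\mathrm{opt}}_{2,1}(2)=1$ this yields $T^{\mathrm{opt}}_{2,1}(2^i)=i$ for $i\geq 1$. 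For $T_{2,2}$ with $m=s/2$: $\alpha_{2,0}=\alpha_{2,2}=\binom{s-2}{m-2}/\binom{s}{m}=\tfrac{m(m-1)}{s(s-1)}$ and $\alpha_{2,1}=\binom{s-2}{m-1}/\binom{s}{m}=\tfrac{m(s-m)}{s(s-1)}$; plugging $m=s/2$ gives $\alpha_{2,0}=\alpha_{2,2}=\tfrac{s-2}{4(s-1)}$ and $2\alpha_{2,1}=\tfrac{s}{2(s-1)}$, and using the already-proved $T^{\mathrm{opt}}_1(s/2)=\log_2(s/2)$ one obtains a first-order recursion for $T^{\mathrm{opt}}_{2,2}(2^i)$ in terms of $T^{\mathrm{opt}}_{2,2}(2^{i-1})$ with an explicit inhomogeneous term linear in $i$. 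Solving that recursion with the seed $T^{\mathrm{opt}}_{2,2}(2)=1$ gives the closed form; substituting everything into the convex combination and simplifying over the common denominator $2^i+1$ should produce exactly $\frac{(i-1)2^{i+1}+i+2}{2^i+1}$.

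The one genuine subtlety — the step I expect to require the most care — is the legitimacy of substituting the single midpoint value $m=2^{i-1}$ into the min that defines $T^{\mathrm{opt}}$. Lemma~\ref{lem:MidPoint} says $\lfloor s/2\rfloor,\lceil s/2\rceil\in\argmin_m T^m_{\cdot}(s)$, and since $s=2^i$ is even these coincide at $2^{i-1}$, so $T^{\mathrm{opt}}_{\cdot}(2^i)=T^{2^{i-1}}_{\cdot}(2^i)$; this is exactly what is needed, but one must be careful that the recursions of Lemma~\ref{lem:Tmw} for $T^m_{2,1}$, $T^m_{2,2}$ reference $T^{\mathrm{opt}}_{2,1}$, $T^{\mathrm{opt}}_{2,2}$ and $T^{\mathrm{opt}}_1$ at the \emph{smaller} power-of-two argument $2^{i-1}$ (and not at non-powers of two), which is the case precisely because $m$ and $s-m$ are both $2^{i-1}$. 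So strong induction on $i$, with the three quantities $T^{\mathrm{opt}}_1$, $T^{\mathrm{opt}}_{2,1}$, $T^{\mathrm{opt}}_{2,2}$ carried together, closes cleanly. The remaining work is the routine algebra of solving two first-order linear recursions and simplifying one rational expression, which I would present compactly rather than in full detail.
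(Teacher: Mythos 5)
Your proposal is correct and follows essentially the same route as the paper: invoke Lemma~\ref{lem:MidPoint} to fix $m=\lfloor s/2\rfloor$, substitute into the recursions of Lemma~\ref{lem:Tmw} to get first-order recursions for $T^{\mathrm{opt}}_1$, $T^{\mathrm{opt}}_{2,1}$, and $T^{\mathrm{opt}}_{2,2}$ at powers of two, solve them, and combine via $T^{\mathrm{opt}}_2(s)=\frac{2}{s+1}T^{\mathrm{opt}}_{2,1}(s)+\frac{s-1}{s+1}T^{\mathrm{opt}}_{2,2}(s)$. The only cosmetic difference is that the paper writes the recursions for general $s$ with floors and ceilings before specializing to $s=2^i$, whereas you specialize immediately; your coefficient computations and the intermediate closed form for $T^{\mathrm{opt}}_{2,2}(2^i)$ match the paper's.
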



Recall that for the optimal nested strategy $\Psi$ defined earlier, we have $T^{\Psi}_{\mathrm{ave}}(n) = T^{\mathrm{opt}}_2(n)$. Thus the following result is immediate by the result of Lemma~\ref{average-lemma-2}. 

\begin{theorem}\label{averagetheorem}
For the optimal nested strategy $\Psi$, we have $T_{\mathrm{ave}}(n)=\frac{2n+1}{n+1}\log_2 n-\frac{2(n-1)}{n+1}$.
\end{theorem}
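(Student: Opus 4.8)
The plan is to derive Theorem~\ref{averagetheorem} directly from Lemma~\ref{average-lemma-2}. Recall that $n = 2^{l}$, so $l = \log_2 n$, and that the optimal nested strategy $\Psi$ was constructed precisely so that $T^{\Psi}_{\mathrm{ave}}(n) = T^{\mathrm{opt}}_{2}(n) = T^{\mathrm{opt}}_{2}(2^{l})$. Lemma~\ref{average-lemma-2} gives the closed form $T^{\mathrm{opt}}_{2}(2^{i}) = \frac{(i-1)2^{i+1} + i + 2}{2^{i}+1}$; substituting $i = l$ yields $T_{\mathrm{ave}}(n) = \frac{(l-1)2^{l+1} + l + 2}{2^{l}+1}$. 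So the entire proof is the algebraic identity $\frac{(l-1)2^{l+1} + l + 2}{2^{l}+1} = \frac{2n+1}{n+1}l - \frac{2(n-1)}{n+1}$ with $n = 2^{l}$.

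To verify this, I would put the right-hand side over the common denominator $n+1 = 2^{l}+1$: it becomes $\frac{(2n+1)l - 2(n-1)}{n+1} = \frac{(2\cdot 2^{l}+1)l - 2(2^{l}-1)}{2^{l}+1} = \frac{2^{l+1}l + l - 2^{l+1} + 2}{2^{l}+1}$. Now compare the numerator with that of the left-hand side: $(l-1)2^{l+1} + l + 2 = 2^{l+1}l - 2^{l+1} + l + 2$, which is exactly $2^{l+1}l + l - 2^{l+1} + 2$. The numerators agree and the denominators agree, so the identity holds and the theorem follows.

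There is essentially no obstacle here: the step is a one-line substitution followed by a routine cross-multiplication, and all the genuine work lives in Lemma~\ref{average-lemma-2} (which in turn rests on Lemmas~\ref{lem:Tmw} and~\ref{lem:MidPoint}, establishing the recursions for $T^{m}_{2,d}(s)$ and $T^{m}_{2}(s)$ and the fact that bisecting at $\lfloor s/2\rfloor$ is optimal). If I wanted to be careful about one point, it is the reduction $T^{\Psi}_{\mathrm{ave}}(n) = T^{\mathrm{opt}}_{2}(n)$, which the text asserts ``by a simple recursive argument''; one should note that starting from a configuration of total weight $w(N) = 2$ with all weights unknown, the first weighing of the optimal strategy on a size-$m$ subset splits into the three outcomes $w(R)\in\{0,1,2\}$ exactly as in the definition of $\Psi^{m}(s,w)$, and minimizing over $m$ gives $T^{\mathrm{opt}}_{2}(n)$. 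Given that reduction and Lemma~\ref{average-lemma-2}, rewriting $l = \log_2 n$ and $2^{l} = n$ completes the argument.
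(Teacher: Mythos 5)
Your proposal is correct and matches the paper exactly: the paper also obtains Theorem~\ref{averagetheorem} as an immediate consequence of the reduction $T^{\Psi}_{\mathrm{ave}}(n)=T^{\mathrm{opt}}_2(n)$ together with the closed form $T^{\mathrm{opt}}_2(2^i)=\frac{(i-1)2^{i+1}+i+2}{2^i+1}$ from Lemma~\ref{average-lemma-2}, with $i=l=\log_2 n$. Your explicit verification of the algebraic identity simply spells out what the paper leaves as ``immediate.''
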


%
%

\subsection{Worst-Case Setting}
Consider an optimal nested strategy $\Psi^{\star}$ for the worst-case setting, defined similarly as the strategy $\Psi$ for the average-case setting, except when considering the maximum number of required weighings (instead of the expected number of required weighings). Then, the following result holds~\cite{WZC:17}.


\begin{theorem}\cite{WZC:17}\label{worsttheorem}
For the optimal nested strategy $\Psi^{\star}$, we have $T_{\mathrm{max}}(n)=2\log_2 n-1$.
\end{theorem}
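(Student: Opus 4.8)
The plan is to mirror the recursive structure already established for the proposed strategy in the worst-case analysis, but now tracking the optimal nested strategy $\Psi^{\star}$. The key observation is that in a nested strategy applied to a set $S$ with $w(S)=2$, after the first weighing of a subset $R\subseteq S$ exactly one of the following occurs: (i) $w(R)\in\{0,2\}$, in which case the weight-$2$ mass is confined to one side and we recurse on a single set of size either $|R|$ or $s-|R|$ with total weight $2$; or (ii) $w(R)=1$, in which case the mass splits as one unit in $R$ and one unit in $S\setminus R$, and we must resolve two disjoint weight-$1$ subproblems. First I would argue that by symmetry and a standard exchange argument (as in Lemma~\ref{lem:MidPoint}), balanced splits $m=\lfloor s/2\rfloor,\lceil s/2\rceil$ are optimal for the worst-case cost as well, so it suffices to analyze the balanced recursion on $n=2^l$.

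Next I would set up the two relevant quantities: $W_2(2^i)$, the worst-case number of weighings for a weight-$2$ set of size $2^i$, and $W_1(2^i)$, that for a single weight-$1$ set of size $2^i$ (equivalently $T^{\star}_1(2^i)$). For the weight-$1$ case, a single bisection always halves the problem, giving $W_1(2^i)=i$, i.e. $W_1(2^i)=\log_2 2^i$. For the weight-$2$ case, the first balanced weighing leads either to a weight-$2$ subproblem of size $2^{i-1}$ (cost $1+W_2(2^{i-1})$ in the worst branch) or to two weight-$1$ subproblems of size $2^{i-1}$ which, being disjoint and independent, can be resolved sequentially at a combined cost of $1+2W_1(2^{i-1}) = 1+2(i-1)$. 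Taking the maximum over these branches and using $W_2(2^{i-1})\ge W_1(2^{i-1})+\text{something}$ inductively, one shows the weight-$2$ branch dominates, yielding the recursion $W_2(2^i)=W_2(2^{i-1})+2$ with base case $W_2(2)=1$ (one weighing suffices for two coins summing to $2$). Solving gives $W_2(2^l)=2l-1=2\log_2 n-1$, and since $T_{\mathrm{max}}(n)=W_2(n)$, the theorem follows.

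The main obstacle is justifying rigorously that the optimal nested strategy cannot do better by choosing an \emph{unbalanced} first split, and more subtly, that within the $w(R)=1$ branch one genuinely pays for resolving both weight-$1$ subsets rather than being able to amortize; both points are exactly where the "nested" restriction is used, since a nested strategy is forbidden from weighing sets that straddle the two halves once the split is known. I would handle the balancing claim by a convexity/monotonicity argument on the branching probabilities-free worst-case recursion (the worst-case cost of a weight-$w$ set is non-decreasing in its size, so pushing toward equal halves never increases the max of the two branch costs), and handle the two-subproblem lower bound by noting that each weight-$1$ subset of size $2^{i-1}$ requires at least $i-1$ weighings on its own and the nested constraint prevents sharing a weighing between them. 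The remaining steps — verifying the base cases and unrolling the arithmetic recursion — are routine. Since this result is attributed to~\cite{WZC:17}, an alternative acceptable route is simply to invoke \cite[Lemma~1]{WZC:17} for the achievability of $2\log_2 n-1$ together with the matching worst-case lower bound of Theorem~\ref{lem:bound}'s type argument restricted to nested strategies; but the self-contained recursive proof above is the version I would write out.
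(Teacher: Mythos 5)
The paper does not actually prove this statement --- it is imported wholesale from \cite{WZC:17} --- so your self-contained recursion goes beyond what the paper does, and it has to be judged on its own. The achievability half is essentially right, but one of your justifications is backwards. In the balanced recursion the branch that dominates is \emph{not} the weight-$2$ branch: with $W_2(2^{i-1})=2(i-1)-1=2i-3$ and $2W_1(2^{i-1})=2(i-1)=2i-2$, the maximum is attained by the two weight-$1$ subproblems, giving $W_2(2^i)=1+2(i-1)=2i-1$. Had the weight-$2$ branch dominated as you claim, the recursion would be $W_2(2^i)=1+W_2(2^{i-1})$, which solves to $\log_2 n$, not $2\log_2 n-1$; the identity $W_2(2^i)=W_2(2^{i-1})+2$ that you write down is numerically correct only because it happens to coincide with $1+2W_1(2^{i-1})$.

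The more serious gap is the optimality (lower-bound) half. Your principle that pushing toward equal halves never increases the worst-case cost is false for this recursion: for $s=6$ the unbalanced split $m=2$ gives $1+\max\{W_2(2),W_2(4),W_1(2)+W_1(4)\}=1+\max\{1,3,3\}=4$, while the balanced split $m=3$ gives $1+\max\{W_2(3),W_2(3),2W_1(3)\}=1+4=5$. So balanced splits are not optimal for general $s$, and in particular the inequality $W_2(s)\ge 2\log_2 s-1$ fails at $s=6$. To show that no nested strategy beats $2\log_2 n-1$ when $n=2^l$, you must therefore control $W_2$ and $W_1$ on \emph{all} set sizes (an unbalanced first weighing immediately creates non-power-of-two subproblems), for instance by an induction establishing a bound of the form $W_2(s)\ge\lfloor\log_2 s\rfloor+\lfloor\log_2(s-1)\rfloor$, which specializes to $2l-1$ at $s=2^l$. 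The information-theoretic bound of Theorem~\ref{lem:bound} is only about $1.262\log_2 n$ and cannot substitute here, and your fallback of invoking \cite[Lemma~1]{WZC:17} covers only achievability, so the matching lower bound over all nested strategies still needs an argument of this kind.
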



\begin{figure}
\centering
\includegraphics[width=0.45\textwidth]{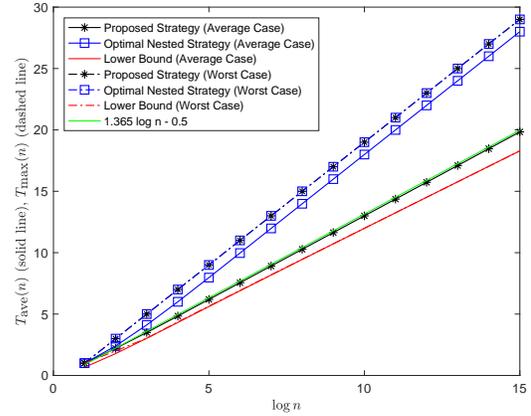}\vspace{-0.25cm}	
\caption{The average-case and worst-case results for the proposed strategy, the optimal nested strategy, and the information-theoretic lower bound.}\label{fig:Results}\vspace{-0.25cm}
\end{figure}

\section{Comparison Results}
In this section, we present our numerical results for the performance of the proposed strategy in both the average-case and worst-case settings. For each setting, the performance of the proposed strategy is compared with the performance of the optimal nested strategy (defined in Section~\ref{sec:ONWS}) and the information-theoretic lower bound (Theorem~\ref{lem:bound}). 

Fig.~\ref{fig:Results} illustrates that the proposed strategy, in the average-case setting, significantly outperforms the optimal nested strategy. Also, in the worst-case setting, the proposed strategy achieves the same performance as  the nested strategy. Our numerical evaluations suggest that the expected number of weighings required by the proposed strategy, which is computable using the recursive formulas in Section~\ref{sec:Analysis}, can be also approximated by $1.365\log_2 n-0.5$ as $n$ grows unbounded (see Fig.~\ref{fig:Results}). In this asymptotic regime, the optimal nested strategy requires $2\log_2 n-2$ weighings on average, and the information-theoretic lower bound is $2\log_3 n\approx 1.262\log_2 n$. Thus, a simple calculation shows that the proposed strategy, when compared to the optimal nested strategy, requires about $31.75\%$ less number of weighings on average. Additionally, when compared to the information-theoretic lower bound, the proposed strategy requires at most about $8.16\%$ extra number of weighings on average.   

\bibliographystyle{IEEEtran}
\bibliography{QGTRefs}

\appendix[Proofs of Lemmas]

\begin{proof}[Proof of Lemma~\ref{lem:Tii}]
It is easy to see that $T_{0,0}=T(1,1)=0$, since we have two coins and the weight of each coin is $1$, so no weighing is required. To obtain $T_{1,1}=T(2,2)$, we know that we have four coins and the total weight of two coins (set $A$) is $1$ and the total weight of the other two (set $B$) is also $1$. Thus, one coin in $A$ and one coin in $B$ are weighed together. The weighing outcome is either (i) $0$ or $2$ with probability $\frac{1}{2}$, or (ii) $1$ with probability $\frac{1}{2}$. In the case (i), with just one weighing the weights of all coins are determined. In the case (ii), one more weighing is needed to be performed on one coin in $A$ or one coin in $B$, in order to find the weights of all coins. Thus, $T_{1,1}=\frac{1}{2}(1)+\frac{1}{2}(2)=\frac{3}{2}$. 

For any $1<i<l$, $T_{i,i}$ can be computed based on a similar reasoning as follows. By performing one weighing on the union set of half of $A$ (say $A_1$) and half of $B$ (say $B_1$), with probability $\frac{1}{2}$ the weighing outcome is $0$ or $2$, and the expected number of extra required weighings is $T_{i-1,i-1}$. Otherwise, with probability $\frac{1}{2}$, the weighing outcome is $1$. In this case, one more weighing is needed to be performed on the union set of $A_1$ and half of $B_2$ (say $(B_2)_1$). Followed by two weighings, with probability $\frac{1}{2}$, the expected number of extra required weighings is equal to $T_{i-1,i-1}$; and with probability $\frac{1}{2}$, this quantity is equal to $T_{i-2,i-1}$ (see Fig. \ref{fig:fig3}). Thus, we have \[T_{i,i}=\frac{1}{2}(T_{i-1,i-1}+1)+\frac{1}{4}(T_{i-1,i-1}+2)+\frac{1}{4}(T_{i-2,i-1}+2),\] or equivalently, \[T_{i,i}=\frac{3}{4}T_{i-1,i-1}+\frac{1}{4}T_{i-2,i-1}+\frac{3}{2}.\] This completes the proof.
\end{proof}

\begin{proof}[Proof of Lemma~\ref{lem:Tij}]
For any $T_{i,j}$, we consider two disjoint sets $A$ and $B$ of size $2^i$ and $2^j$, respectively, each set of total weight $1$. By the definition, $T_{0,j}=T(1,2^j)=T_1(2^j)$. In this case, the proposed strategy applies the procedure $\Pi_0(B)$, which requires $j$ weighings, on average, to determine the weights of all coins. Thus, $T_{0,j}=j$.   
	
Now, consider $T_{1,j}=T(2,2^j)$. In this case, with one weighing (on the set $A_1\cup B_1$) with probability $\frac{1}{2}$, the outcome is $0$ or $2$. For the outcome $0$ (or $2$), we find that one coin with weight $1$ is in $A_2$ (or $A_1$) and the other coin of weight $1$ is in $B_2$ (or $B_1$). Thus, $T(1,2^{j-1})=j-1$ more weighings, on average, are needed to determine the weights of all coins. Also, with probability $\frac{1}{2}$, the weighing outcome is $1$ and the weight of no coin is discovered by this particular weighing. In this case, with one more weighing (on the set $A_1\cup (B_2)_1$), with probability $\frac{1}{2}$ the weighing outcome is $0$, and $T(1,2^{j-1})=j-1$ more weighings, on average, are needed to determine the weights of all coins. Otherwise, with probability $\frac{1}{2}$, the weighing outcome is $1$ or $2$, and thus, $T(1,2^{j-2})=j-2$ more weighings are needed on average to find the weights of all coins. As a result, we have 
\begin{dmath*}
T_{1,j}=\frac{1}{2}(j-1+1)+\frac{1}{4}(j-1+2)+\frac{1}{4}(j-2+2)=j+\frac{1}{4}.	
\end{dmath*} 
	
Lastly, consider $T_{i,i+j}=T(2^{i},2^{i+j})$. Performing one weighing (on the set $A_1\cup B_1$), there are two cases: (i) the outcome is $0$ or $2$ (with probability $\frac{1}{2}$), and (ii) the outcome is $1$ (with probability $\frac{1}{2}$). First, consider the case (i). By a similar argument as before, for the outcome $0$ (or $2$), we find the weight of all coins in $A_1$ (or $A_2$) and that of all coins in $B_1$ (or $B_2$), respectively. Thus, $T(2^{i-1},2^{i+j-1})$ more weighings are needed, on average, to determine the weights of all coins. Next, consider the case (ii). There are two sub-cases: (ii-1) with probability $\frac{1}{4}$, the coin with weight $1$ belongs to the larger set, and (ii-2) with probability $\frac{1}{4}$, the coin with weight $1$ belongs to the smaller set. In the case (ii-1), the weight of no coin can be determined by this particular weighing. Thus, with one more weighing (on the set $(A_2)_1\cup B_1$ or $A_1\cup (B_2)_1$), with probability $\frac{1}{2}$ the weighing outcome is $0$, and consequently, $T(2^{i-1},2^{i+j-1})$ more weighings are needed, on average, to determine the weights of all coins; otherwise, with probability $\frac{1}{2}$, the weighing outcome is $1$ or $2$, and $T(2^{i-2},2^{i+j-1})$ more weighings are needed, on average, to find the weights of all coins. In the case (ii-2), again, the weight of no coin is determined by this specific weighing. Thus, with one more weighing (on set $(A_2)_1\cup B_1$ or $A_1\cup (B_2)_1$), with probability $\frac{1}{2}$, the weighing outcome is $0$, and $T(2^{i-1},2^{i+j-1})$ more weighings are needed, on average, to determine the weights of all coins. Otherwise, with probability $\frac{1}{2}$, the weighing outcome is $1$ or $2$, and $T(2^{i+j-2},2^{i+j-2})$ more weighings, on average, are needed to find the weights of all coins. Thus, we have 
\begin{dmath*}
T_{i,i+j}=\frac{1}{2}(T_{i-1,i+j-1}+1)+\frac{1}{8}(T_{i-1,i+j-1}+2)+\frac{1}{8}(T_{i-2,i+j-1}+2)
	+\frac{1}{8}(T_{i-1,i+j-1}+2)+\frac{1}{8}(T_{i+j-2,i+j-2}+2),	
\end{dmath*} or in turn, 
\begin{dmath*}
T_{i,i+j}=\frac{3}{4}T_{i-1,i+j-1}+\frac{1}{8}T_{i-2,i+j-1}+\frac{1}{8}T_{i+j-2,i+j-2}+\frac{3}{2}.	
\end{dmath*} This completes the proof.
\end{proof}

\begin{proof}[Proof of Lemma~\ref{lem:TDelta}]
Fix an arbitrary $0\leq \Delta\leq n-1$. Consider the application of the proposed strategy on an arbitrary configuration $\{w_i\}$ such that $\Delta(\{w_i\})=\Delta$. Let $p_{\Delta,0}$ (or respectively, $q_{\Delta,0}$) be the probability that the outcome of the first weighing is $0$ or $2$ (or respectively, $1$). Thus, with probability $p_{\Delta,0}$, followed by performing one weighing, the expected number of extra required weighings is $T(\frac{n}{2}|\Delta)$. Similarly, with probability $p_{\Delta,0} p_{\Delta,1}$, after performing two weighings, $T(\frac{n}{4}|\Delta)$ more weighings are needed on average, and so forth (see the straight line in Fig.~\ref{fig:fig2}). Thus, with probability $\prod_{i=0}^{l-1} p_{\Delta,i}$, $l$ weighings are needed. On the other hand, with probability $q_{\Delta,0}$, followed by performing one weighing, $T_{l-1,l-1}$ more weighings are needed on average. Similarly, with probability $p_{\Delta,0} q_{\Delta,1}$, after performing two weighings, $T_{l-2,l-2}$ extra number of weighings are needed on average, and so forth (see the diagonal lines in Fig.~\ref{fig:fig2}). Putting everything together, we can write 
\begin{dmath*}
T(n|\Delta) = q_{\Delta,0}(T_{l-1,l-1}+1)+p_{\Delta,0}q_{\Delta,1}(T_{l-2,l-2}+2)+\dots+p_{\Delta,0}p_{\Delta,1}\cdots p_{\Delta,l-1}(l),
\end{dmath*} or equivalently, 
\begin{dmath*}
T(n|\Delta)=m_{\Delta,0}(T_{l-1,l-1}+1)+m_{\Delta,1}(T_{l-2,l-2}+2)+\dots+m_{\Delta,l-1}(T_{0,0}+l)+m_{\Delta,l}(l).\end{dmath*} This completes the proof.
\end{proof}

\begin{proof}[Proof of Lemma~\ref{lem:Tmw}]
Consider an arbitrary collection of $s$ coins of total weight $w$. There are two cases: (i) $w=1$, and (ii) $w=2$. In the case (i), a randomly chosen subset of $m$ coins weighs $0$ with probability $\alpha_{1,0}$, and it weighs $1$ with probability $\alpha_{1,1}$. In these two sub-cases the expected number of extra required weighings is $T^{\mathrm{opt}}_{1,1}(s-m)$ and $T^{\mathrm{opt}}_{1,1}(m)$, respectively. Note that $T^m_{1,1}(t) = T^{m}_{1}(t)$ for all $t$, and so, $T^{\mathrm{opt}}_{1,1}(t) = T^{\mathrm{opt}}_{1}(t)$ for all $t$. Thus, \begin{dmath*}
T^m_1(s) = \alpha_{1,0} (T^{\mathrm{opt}}_{1}(s-m)+1)+\alpha_{1,1} (T^{\mathrm{opt}}_{1}(m)+1)
\end{dmath*} 
In the case (ii), there are two sub-cases: (ii-1) there is one coin of weight $2$ (there exist $s$ distinct sub-configurations with this characteristic), and (ii-2) there are two coins, each of weight $1$ (there exist $\binom{s}{2}$ distinct such sub-configurations). Thus, \[T^{m}_2(s) = \frac{2}{s+1}T^{m}_{2,1}(s)+\frac{s-1}{s+1} T^{m}_{2,2}(s).\] In the case (ii-1), a randomly chosen subset of $m$ coins weighs $0$ or $2$ with probability $\alpha_{1,0}$ or $\alpha_{1,1}$, respectively, and in these two sub-cases the expected number of extra required weighings is $T^{\mathrm{opt}}_{2,1}(s-m)$ and $T^{\mathrm{opt}}_{2,1}(m)$, respectively. Thus, \begin{dmath*}
T^{m}_{2,1}(s) = \alpha_{1,0}(T^{\mathrm{opt}}_{2,1}(s-m)+1)+\alpha_{1,1}(T^{\mathrm{opt}}_{2,1}(m)+1). 	
\end{dmath*} Similarly, in the case (ii-2), a randomly chosen subset of $m$ coins weighs $0$, or $1$, or $2$ with probability $\alpha_{2,0}$, or $2\alpha_{2,1}$, or $\alpha_{2,2}$, respectively. In these three sub-cases, the expected number of extra required weighings is $T^{\mathrm{opt}}_{2,2}(s-m)$, $T^{\mathrm{opt}}_{1}(m)+T^{\mathrm{opt}}_{1}(s-m)$, and $T^{\mathrm{opt}}_{2,2}(m)$, respectively. Thus, 
\begin{dmath*}
T^{m}_{2,2}(s) = \alpha_{2,0}(T^{\mathrm{opt}}_{2,2}(s-m)+1)+2\alpha_{2,1}(T^{\mathrm{opt}}_{1}(m)+T^{\mathrm{opt}}_1(s-m)+1)+\alpha_{2,2}(T^{\mathrm{opt}}_{2,2}(m)+1). 	
\end{dmath*} 
This completes the proof.
%
%
%
%
\end{proof}

\begin{proof}[Proof of Lemma~\ref{lem:MidPoint}]
The proof techniques are the same for $T^{m}_1(s)$, $T^{m}_{2,d}(s)$, and $T^{m}_2(s)$, and we only state the proof for $T^{m}_1(s)$ to avoid repetition. In particular, we shall show that for any $2\leq s\leq n$, we have $\lfloor \frac{s}{2}\rfloor,\lceil\frac{s}{2}\rceil\in \argmin_{1\leq m\leq s-1} T^{m}_1(s)$. 

The proof is by induction on $s$. It is easy to see that for $s=2$, we have $\lfloor \frac{2}{2}\rfloor,\lceil\frac{2}{2}\rceil=1=\argmin_{1\leq m\leq 2-1} T^{m}_1(2)$. The induction hypothesis is that for $s\leq l-1$, 
\begin{dmath}\label{eq:IH1}
\left\lfloor \frac{s}{2}\right\rfloor,\left\lceil\frac{s}{2}\right\rceil\in \argmin_{1\leq m\leq s-1} T^{m}_1(s)	
\end{dmath} holds. To complete the proof, it is enough to show that for $s=l$, we have $\lfloor \frac{l}{2}\rfloor,\lceil\frac{l}{2}\rceil\in \argmin_{1\leq m\leq l-1} T^{m}_1(l)$. Based on the formula for $T^{m}_1(s)$ in Lemma~\ref{lem:Tmw}, it can be readily confirmed that $T^{m}_1(s)$ is symmetric around midrange point, i.e., $T^{\lceil\frac{s}{2}\rceil}_1(s) = T^{\lfloor\frac{s}{2}\rfloor}_1(s)$. Thus, showing the proof for $\lfloor \frac{s}{2}\rfloor$ suffices. More specifically, we need to show the following:

\begin{dmath}\label{eq:E1}
\frac{\binom{l-1}{\left\lfloor \frac{l}{2}\right\rfloor}}{\binom{l}{\left\lfloor \frac{l}{2}\right\rfloor}}
T^{\mathrm{opt}}_{1}(l-\left\lfloor \frac{l}{2}\right\rfloor)+ \frac{\binom{l-1}{\left\lfloor \frac{l}{2}\right\rfloor-1}}{\binom{l}{\left\lfloor \frac{l}{2}\right\rfloor}}
T^{\mathrm{opt}}_{1}(\left\lfloor \frac{l}{2}\right\rfloor) \leq \frac{\binom{l-1}{m}}{\binom{l}{m}}
T^{\mathrm{opt}}_{1}(l-m) + \frac{\binom{l-1}{m-1}}{\binom{l}{m}}
T^{\mathrm{opt}}_{1}(m) 	
\end{dmath} for all $1\leq m\leq l-1$. Due to the symmetry, it suffices to show that~\eqref{eq:E1} holds for all $1\leq m\leq \lfloor\frac{l}{2}\rfloor$. We consider two cases: (i) $l=2k$, and (ii) $l=2k+1$, for some $k\geq 1$. 

\subsubsection*{Proof for Case (i)} Noting that $l=2k$ and $\lfloor\frac{l}{2}\rfloor=k$,~\eqref{eq:E1} can be written as
\begin{dmath*}
\frac{\binom{2k-1}{k}}{\binom{2k}{k}}
T^{\mathrm{opt}}_{1}(k)+ \frac{\binom{2k-1}{k-1}}{\binom{2k}{k}}
T^{\mathrm{opt}}_{1}(k) \leq \frac{\binom{2k-1}{m}}{\binom{2k}{m}}
T^{\mathrm{opt}}_{1}(2k-m) + \frac{\binom{2k-1}{m-1}}{\binom{2k}{m}}
T^{\mathrm{opt}}_{1}(m),	
\end{dmath*} or equivalently,
\begin{dmath}\label{eq:E3}
T^{\mathrm{opt}}_{1}(k) \leq \left(\frac{2k-m}{2k}\right)
T^{\mathrm{opt}}_{1}(2k-m) + \left(\frac{m}{2k}\right)
T^{\mathrm{opt}}_{1}(m) 	
\end{dmath} for all $1\leq m\leq k$.
An inductive argument (on $m$) is used to prove that~\eqref{eq:E3} holds. It is easy to see that~\eqref{eq:E3} holds for $m=k$, i.e., \begin{dmath*}
 T^{\mathrm{opt}}_{1}(k) = \left(\frac{2k-k}{2k}\right)
T^{\mathrm{opt}}_{1}(2k-k) + \left(\frac{k}{2k}\right)
T^{\mathrm{opt}}_{1}(k).	
\end{dmath*} We assume that~\eqref{eq:E3} holds for $m>t$. We need to show that for $m=t$, the following holds:
\begin{dmath}\label{eq:E4}
T^{\mathrm{opt}}_{1}(k) \leq \left(\frac{2k-t}{2k}\right)
T^{\mathrm{opt}}_{1}(2k-t) + \left(\frac{t}{2k}\right)
T^{\mathrm{opt}}_{1}(t) 	
\end{dmath}. The proof is by contradiction. Suppose that~\eqref{eq:E4} does not hold, i.e., 
\begin{dmath}\label{eq:E5}
T^{\mathrm{opt}}_{1}(k) > \left(\frac{2k-t}{2k}\right)
T^{\mathrm{opt}}_{1}(2k-t) + \left(\frac{t}{2k}\right)
T^{\mathrm{opt}}_{1}(t) 	
\end{dmath}. Since~\eqref{eq:E3} holds for $m>t$ (by assumption), for $m=t+1$ we have
\begin{dmath}\label{eq:E6}
T^{\mathrm{opt}}_{1}(k) \leq \left(\frac{2k-t-1}{2k}\right)
T^{\mathrm{opt}}_{1}(2k-t-1) + \left(\frac{t+1}{2k}\right)
T^{\mathrm{opt}}_{1}(t+1) 	
\end{dmath}. Combining~\eqref{eq:E5} and~\eqref{eq:E6}, we get
\begin{dmath*}\left(\frac{2k-t}{2k}\right)
T^{\mathrm{opt}}_{1}(2k-t) + \left(\frac{t}{2k}\right)
T^{\mathrm{opt}}_{1}(t) 	 < \left(\frac{2k-t-1}{2k}\right)
T^{\mathrm{opt}}_{1}(2k-t-1) + \left(\frac{t+1}{2k}\right)
T^{\mathrm{opt}}_{1}(t+1)
\end{dmath*}
which equivalently can be written as
\begin{dmath}\label{eq:E7}
\left(2k-t\right)
T^{\mathrm{opt}}_{1}(2k-t) -\left(2k-t-1\right)
T^{\mathrm{opt}}_{1}(2k-t-1) < \left(t+1\right)
T^{\mathrm{opt}}_{1}(t+1)-\left(t\right)
T^{\mathrm{opt}}_{1}(t)	
\end{dmath}. We need to disprove~\eqref{eq:E7}. Before moving further with disproving~\eqref{eq:E7}, we shall show the following formula which will be used in the rest of the proof:
\begin{dmath}\label{eq:E8}
T^{\mathrm{opt}}_{1}(q) -
T^{\mathrm{opt}}_{1}(q-1) = \frac{2^{\lfloor \log (q-1) \rfloor+1}}{q(q-1)}
\end{dmath} for all $2\leq q \leq l-1$. The proof of~\eqref{eq:E8} is based on an inductive argument. It is easy to see that for $q=2$, we have $T^{\mathrm{opt}}_{1}(2)-T^{\mathrm{opt}}_{1}(2-1) = 1-0= \frac{2^{\lfloor \log (2-1) \rfloor+1}}{2(2-1)}=1$. We assume that for $q\leq h-1$,~\eqref{eq:E8} holds. It suffices to show that for $q=h$ we have 
\begin{dmath*}
T^{\mathrm{opt}}_{1}(h) -
T^{\mathrm{opt}}_{1}(h-1) = \frac{2^{\lfloor \log (h-1) \rfloor+1}}{h(h-1)}.
\end{dmath*} We consider two cases: (i-1) $h=2d$, and (i-2) $h=2d+1$, for some $d\geq 1$. First, consider the case (i-1). Since $h\leq l-1$ and~\eqref{eq:IH1} holds for $s\leq l-1$, we have $\lfloor \frac{h}{2}\rfloor \in \argmin_{1\leq m\leq h-1} T^{m}_1(h)$ and $\lfloor \frac{h-1}{2}\rfloor \in \argmin_{1\leq m\leq h-2} T^{m}_1(h-1)$.
Using the formula in Lemma~\ref{lem:Tmw} and noting that $h=2d$, $\lfloor \frac{h}{2} \rfloor=d$, and $\lfloor \frac{h-1}{2} \rfloor=d-1$, $T^{\mathrm{opt}}_{1}(h) $ and $T^{\mathrm{opt}}_{1}(h-1)$ can be written as
\begin{dmath}\label{eq:E9}
T^{\mathrm{opt}}_{1}(h)=T^{\mathrm{opt}}_{1}(d)+1,
\end{dmath} and 
\begin{dmath}\label{eq:E10}
T^{\mathrm{opt}}_{1}(h-1) =\left(\frac{d}{2d-1}\right)
T^{\mathrm{opt}}_{1}(d) + \left(\frac{d-1}{2d-1}\right)
T^{\mathrm{opt}}_{1}(d-1) +1.	
\end{dmath}
Subtracting~\eqref{eq:E10} from~\eqref{eq:E9} results in
\begin{dmath}\label{eq:E10-1}
T^{\mathrm{opt}}_{1}(h) -
T^{\mathrm{opt}}_{1}(h-1) = \left(\frac{d-1}{2d-1}\right)(T^{\mathrm{opt}}_{1}(d) -
T^{\mathrm{opt}}_{1}(d-1)).	
\end{dmath}
Since~\eqref{eq:E8} holds for $q\leq h-1$ (by assumption), we have 
\begin{dmath*}
(T^{\mathrm{opt}}_{1}(d) - T^{\mathrm{opt}}_{1}(d-1))= \frac{2^{\lfloor \log (d-1) \rfloor+1}}{d(d-1)}.	
\end{dmath*} Substituting $(T^{\mathrm{opt}}_{1}(d) - T^{\mathrm{opt}}_{1}(d-1))$ by $ \frac{2^{\lfloor \log (d-1) \rfloor+1}}{d(d-1)}$ in~\eqref{eq:E10-1}, we have 
\begin{dmath*}
T^{\mathrm{opt}}_{1}(h) -
T^{\mathrm{opt}}_{1}(h-1) =  \frac{2^{\lfloor \log (d-1) \rfloor+1}}{d(2d-1)}= \frac{2^{\lfloor \log (2d-2) \rfloor+1}}{2d(2d-1)}= \frac{2^{\lfloor \log (2d-1) \rfloor+1}}{2d(2d-1)}= \frac{2^{\lfloor \log (h-1) \rfloor+1}}{h(h-1)}.	
\end{dmath*}
This completes the proof of~\eqref{eq:E8} for the case (i-1). Now, consider the case (i-2). Noting that $h=2d+1$ and using similar arguments as above, it can be shown that 
\begin{dmath*}
T^{\mathrm{opt}}_{1}(h) -
T^{\mathrm{opt}}_{1}(h-1) = \left(\frac{d+1}{2d+1}\right)(T^{\mathrm{opt}}_{1}(d+1) -
T^{\mathrm{opt}}_{1}(d)),	
\end{dmath*}
and 
\begin{dmath*}
(T^{\mathrm{opt}}_{1}(d+1) - T^{\mathrm{opt}}_{1}(d))=\frac{2^{\lfloor \log (d) \rfloor+1}}{d(d+1)}. 	
\end{dmath*} Subsequently, we have 
\begin{dmath*}
T^{\mathrm{opt}}_{1}(h) -
T^{\mathrm{opt}}_{1}(h-1) =  \frac{2^{\lfloor \log (d) \rfloor+1}}{d(2d+1)}= \frac{2^{\lfloor \log (2d) \rfloor+1}}{2d(2d+1)}=\frac{2^{\lfloor \log (h-1) \rfloor+1}}{h(h-1)}.	
\end{dmath*} This completes the proof of~\eqref{eq:E8} for the case (i-2). 

Now, we proceed with disproving~\eqref{eq:E7}. By using~\eqref{eq:E8} and substituting $T^{\mathrm{opt}}_{1}(2k-t)$ by $ T^{\mathrm{opt}}_{1}(2k-t-1)+\frac{2^{\lfloor \log 2k-t-1 \rfloor+1}}{(2k-t-1)(2k-t)}$ and $T^{\mathrm{opt}}_{1}(t+1)$ by $ T^{\mathrm{opt}}_{1}(t)+\frac{2^{\lfloor \log t \rfloor+1}}{t(t+1)}$ in~\eqref{eq:E7}, we have
\begin{dmath*}
T^{\mathrm{opt}}_{1}(2k-t-1)+\frac{2^{\lfloor \log 2k-t-1 \rfloor+1}}{2k-t-1} <  T^{\mathrm{opt}}_{1}(t)+\frac{2^{\lfloor \log t \rfloor+1}}{t}	
\end{dmath*} which equivalently can be written as
\begin{dmath}\label{eq:E11}
T^{\mathrm{opt}}_{1}(2k-t-1)-T^{\mathrm{opt}}_{1}(t)< \frac{2^{\lfloor \log t \rfloor+1}}{t} -\frac{2^{\lfloor \log 2k-t-1 \rfloor+1}}{2k-t-1}	
\end{dmath}. To complete disproving~\eqref{eq:E7}, we need to show that~\eqref{eq:E11} does not hold. To this end, we need to prove the following: 
\begin{dmath}\label{eq:E12}
T^{\mathrm{opt}}_{1}(a+i)-T^{\mathrm{opt}}_{1}(a)\geq \frac{2^{\lfloor \log a \rfloor+1}}{a} -\frac{2^{\lfloor \log a+i \rfloor+1}}{a+i}	
\end{dmath} for all $1\leq a \leq l-2$ and $1\leq i\leq l-1-a$. The proof of~\eqref{eq:E12} is based on an inductive argument (on $i$). For $i=1$, we need to show that 
\begin{dmath*}
T^{\mathrm{opt}}_{1}(a+1)-T^{\mathrm{opt}}_{1}(a)\geq \frac{2^{\lfloor \log a \rfloor+1}}{a} -\frac{2^{\lfloor \log a+1 \rfloor+1}}{a+1}.	
\end{dmath*} Using~\eqref{eq:E8}, we have 
\begin{dmath*}
T^{\mathrm{opt}}_{1}(a+1)-T^{\mathrm{opt}}_{1}(a)=\frac{2^{\lfloor \log a \rfloor+1}}{a(a+1)} > \frac{2^{\lfloor \log a \rfloor+1}}{a} -\frac{2^{\lfloor \log a+1 \rfloor+1}}{a+1},	
\end{dmath*} and subsequently,~\eqref{eq:E12} holds for $i=1$. Next, we assume that for $i=b$, we have 
\begin{dmath}\label{eq:E12-1}
T^{\mathrm{opt}}_{1}(a+b)-T^{\mathrm{opt}}_{1}(a)\geq \frac{2^{\lfloor \log a \rfloor+1}}{a} -\frac{2^{\lfloor \log a+b \rfloor+1}}{a+b}.	
\end{dmath}
It is enough to show that for $i=b+1$, we have 
\begin{dmath*}
T^{\mathrm{opt}}_{1}(a+b+1)-T^{\mathrm{opt}}_{1}(a)\geq \frac{2^{\lfloor \log a \rfloor+1}}{a} -\frac{2^{\lfloor \log a+b+1 \rfloor+1}}{a+b+1}.	
\end{dmath*}
We use proof by contradiction. Assume that
\begin{dmath}\label{eq:E12-2}
T^{\mathrm{opt}}_{1}(a+b+1)-T^{\mathrm{opt}}_{1}(a) < \frac{2^{\lfloor \log a \rfloor+1}}{a} -\frac{2^{\lfloor \log a+b+1 \rfloor+1}}{a+b+1}.	
\end{dmath} Combining~\eqref{eq:E12-1} and~\eqref{eq:E12-2}, we have \begin{equation}\label{eq:E12-3}
T^{\mathrm{opt}}_{1}(a+b+1)-T^{\mathrm{opt}}_{1}(a+b) < \frac{2^{\lfloor \log a+b \rfloor+1}}{a+b} -\frac{2^{\lfloor \log a+b+1 \rfloor+1}}{a+b+1}. 	
\end{equation} Using~\eqref{eq:E8}, we have 
\begin{dmath}\label{eq:E12-4}
T^{\mathrm{opt}}_{1}(a+b+1)-T^{\mathrm{opt}}_{1}(a+b)=\frac{2^{\lfloor \log (a+b)(a+b+1) \rfloor+1}}{a+b} \geq \frac{2^{\lfloor \log a+b \rfloor+1}}{a+b} -\frac{2^{\lfloor \log a+b+1 \rfloor+1}}{a+b+1}.	
\end{dmath} Putting~\eqref{eq:E12-3} and~\eqref{eq:E12-4} together, we arrive at a contradiction, and consequently,~\eqref{eq:E12} holds. 

Now, we can disprove~\eqref{eq:E11}. Taking $a = t$ and $i = 2k-2t-1$ in~\eqref{eq:E12}, we have 
\begin{dmath*}\label{eq:E12-5}
T^{\mathrm{opt}}_{1}(2k-t-1)-T^{\mathrm{opt}}_{1}(t)\geq \frac{2^{\lfloor \log t \rfloor+1}}{t} -\frac{2^{\lfloor \log 2k-t-1 \rfloor+1}}{2k-t-1}, 		
\end{dmath*} which readily contradicts~\eqref{eq:E11}. This completes the disproof of~\eqref{eq:E7}, and consequently, the proof for the case (i). 

\subsubsection*{Proof for Case (ii)} Noting that $l=2k+1$ and $\lfloor\frac{l}{2}\rfloor=k$,~\eqref{eq:E1} can be written as
\begin{dmath*}
\frac{\binom{2k}{k}}{\binom{2k+1}{k}}
T^{\mathrm{opt}}_{1}(k+1)+ \frac{\binom{2k}{k-1}}{\binom{2k+1}{k}}
T^{\mathrm{opt}}_{1}(k) \leq \frac{\binom{2k}{m}}{\binom{2k+1}{m}}
T^{\mathrm{opt}}_{1}(2k-m+1) + \frac{\binom{2k}{m-1}}{\binom{2k+1}{m}},
T^{\mathrm{opt}}_{1}(m)	
\end{dmath*} or equivalently,
\begin{dmath}\label{eq:E13}
\left(\frac{k+1}{2k+1}\right)T^{\mathrm{opt}}_{1}(k+1)+\left(\frac{k}{2k+1}\right)T^{\mathrm{opt}}_{1}(k) \leq \left(\frac{2k-m+1}{2k+1}\right)
T^{\mathrm{opt}}_{1}(2k-m+1) + \left(\frac{m}{2k+1}\right)
T^{\mathrm{opt}}_{1}(m) 	
\end{dmath} for all $1\leq m\leq k$. We use an inductive argument to prove that~\eqref{eq:E13} holds. It is easy to see that~\eqref{eq:E13} holds for $m=k$, i.e., 
\begin{dmath*}
\left(\frac{k+1}{2k+1}\right)T^{\mathrm{opt}}_{1}(k+1)+\left(\frac{k}{2k+1}\right)T^{\mathrm{opt}}_{1}(k) = \left(\frac{2k-k+1}{2k+1}\right)
T^{\mathrm{opt}}_{1}(2k-k+1) + \left(\frac{k}{2k+1}\right)
T^{\mathrm{opt}}_{1}(k).	
\end{dmath*} We assume that~\eqref{eq:E13} holds for $m>t$. We need to show that for $m=t$, the following holds:
\begin{dmath}\label{eq:E14}
\left(\frac{k+1}{2k+1}\right)T^{\mathrm{opt}}_{1}(k+1)+\left(\frac{k}{2k+1}\right)T^{\mathrm{opt}}_{1}(k) \leq \left(\frac{2k-t+1}{2k+1}\right)
T^{\mathrm{opt}}_{1}(2k-t+1) + \left(\frac{t}{2k+1}\right)
T^{\mathrm{opt}}_{1}(t)	
\end{dmath}. The proof is by contradiction. Suppose that~\eqref{eq:E14} does not hold, i.e.,
\begin{dmath}\label{eq:E15}
\left(\frac{k+1}{2k+1}\right)T^{\mathrm{opt}}_{1}(k+1)+\left(\frac{k}{2k+1}\right)T^{\mathrm{opt}}_{1}(k) > \left(\frac{2k-t+1}{2k+1}\right)
T^{\mathrm{opt}}_{1}(2k-t+1) + \left(\frac{t}{2k+1}\right)
T^{\mathrm{opt}}_{1}(t)		
\end{dmath}.
We need to disprove~\eqref{eq:E15}. Since~\eqref{eq:E13} holds for $m>t$ (by assumption), for $m=t+1$ we have
\begin{dmath}\label{eq:E16}
\left(\frac{k+1}{2k+1}\right)T^{\mathrm{opt}}_{1}(k+1)+\left(\frac{k}{2k+1}\right)T^{\mathrm{opt}}_{1}(k) \leq \left(\frac{2k-t}{2k+1}\right)
T^{\mathrm{opt}}_{1}(2k-t) + \left(\frac{t+1}{2k+1}\right)
T^{\mathrm{opt}}_{1}(t+1)		
\end{dmath}. Combining~\eqref{eq:E15} and~\eqref{eq:E16} yields
\begin{dmath*}
\left(\frac{2k-t+1}{2k+1}\right)
T^{\mathrm{opt}}_{1}(2k-t+1) + \left(\frac{t}{2k+1}\right)
T^{\mathrm{opt}}_{1}(t)		 < \left(\frac{2k-t}{2k+1}\right)
T^{\mathrm{opt}}_{1}(2k-t) + \left(\frac{t+1}{2k+1}\right)
T^{\mathrm{opt}}_{1}(t+1)	
\end{dmath*} which equivalently can be written as
\begin{dmath}\label{eq:E17}
\left(2k-t+1\right)
T^{\mathrm{opt}}_{1}(2k-t+1) -\left(2k-t\right)
T^{\mathrm{opt}}_{1}(2k-t) < \left(t+1\right)
T^{\mathrm{opt}}_{1}(t+1)-\left(t\right)
T^{\mathrm{opt}}_{1}(t)	
\end{dmath}.
Using~\eqref{eq:E8} and substituting $T^{\mathrm{opt}}_{1}(2k-t+1)$ by $ T^{\mathrm{opt}}_{1}(2k-t)+\frac{2^{\lfloor \log (2k-t) \rfloor+1}}{(2k-t+1)(2k-t)}$ and $T^{\mathrm{opt}}_{1}(t+1)$ by $ T^{\mathrm{opt}}_{1}(t)+\frac{2^{\lfloor \log t \rfloor+1}}{t(t+1)}$ in~\eqref{eq:E17}, we have
\begin{dmath*}
T^{\mathrm{opt}}_{1}(2k-t)+\frac{2^{\lfloor \log (2k-t) \rfloor+1}}{2k-t} <  T^{\mathrm{opt}}_{1}(t)+\frac{2^{\lfloor \log (t) \rfloor+1}}{t}	
\end{dmath*} which equivalently can be written as
\begin{dmath}\label{eq:E18}
T^{\mathrm{opt}}_{1}(2k-t)-T^{\mathrm{opt}}_{1}(t)< \frac{2^{\lfloor \log t \rfloor+1}}{t} -\frac{2^{\lfloor \log (2k-t) \rfloor+1}}{2k-t}.
\end{dmath} Taking $a = t$ and $i = 2k-2t$ in~\eqref{eq:E12}, we have 
\begin{dmath*}\label{eq:E18-1}
T^{\mathrm{opt}}_{1}(2k-t)-T^{\mathrm{opt}}_{1}(t)\geq \frac{2^{\lfloor \log t \rfloor+1}}{t} -\frac{2^{\lfloor \log 2k-t \rfloor+1}}{2k-t}, 		
\end{dmath*} which readily contradicts~\eqref{eq:E18}. This completes the disproof of~\eqref{eq:E15}, and consequently, the proof for the case (ii). 
\end{proof}

\begin{proof}[Proof of Lemma~\ref{average-lemma-2}]
By the results of Lemmas~\ref{lem:Tmw} and~\ref{lem:MidPoint}, the following recursive formulas can be shown: 
\begin{dmath*}
T^{\mathrm{opt}}_{1}(s) =  \frac{\lceil \frac{s}{2}\rceil}{s}T^{\mathrm{opt}}_{1}\left(\left\lceil\frac{s}{2}\right\rceil\right)+\frac{\lfloor\frac{s}{2}\rfloor}{s} T^{\mathrm{opt}}_{1}\left(\left\lfloor\frac{s}{2}\right\rfloor\right)+1,
\end{dmath*} for all $2\leq s\leq n$, and 
\begin{dmath*}
T^{\mathrm{opt}}_{2,1}(s) =  \frac{\lceil \frac{s}{2}\rceil}{s}T^{\mathrm{opt}}_{2,1}\left(\left\lceil\frac{s}{2}\right\rceil\right)+\frac{\lfloor\frac{s}{2}\rfloor}{s} T^{\mathrm{opt}}_{2,1}\left(\left\lfloor\frac{s}{2}\right\rfloor\right)+1,
\end{dmath*} and
\begin{dmath*}
T^{\mathrm{opt}}_{2,2}(s) =  \frac{\lceil \frac{s}{2}\rceil (\lceil \frac{s}{2}\rceil-1)}{s(s-1)}T^{\mathrm{opt}}_{2,2}\left(\left\lceil\frac{s}{2}\right\rceil\right)+\frac{\lfloor\frac{s}{2}\rfloor(\lfloor\frac{s}{2}\rfloor-1)}{s(s-1)} T^{\mathrm{opt}}_{2,2}\left(\left\lfloor\frac{s}{2}\right\rfloor\right)+\frac{2\lfloor\frac{s}{2}\rfloor \lceil\frac{s}{2}\rceil}{s(s-1)}\left(T^{\mathrm{opt}}_1\left(\left\lceil\frac{s}{2}\right\rceil\right)+T^{\mathrm{opt}}_1\left(\left\lfloor\frac{s}{2}\right\rfloor\right)\right)+1,
\end{dmath*} for all $3\leq s\leq n$. Solving these recursions, it follows that $T^{\mathrm{opt}}_1(2^i) = i$ for all $1\leq i\leq l$, and $T^{\mathrm{opt}}_{2,1}(2^i) = i$ and $T^{\mathrm{opt}}_{2,2}(2^i) = ({1+(i-1)(2^{i+1}-1)})/({2^i-1})$ for all ${2\leq i\leq l}$. Noting that $T^{\mathrm{opt}}_1(1)=0$, it is easy to verify that $T^{\mathrm{opt}}_1(2^i)=i$ for all $0\leq i\leq l$. Similarly, it follows that $T^{\mathrm{opt}}_2(s) = \frac{2}{s+1} T^{\mathrm{opt}}_{2,1}(s)+\frac{s-1}{s+1}T^{\mathrm{opt}}_{2,2}(s)$ for all $3\leq s\leq n$, and subsequently, $T^{\mathrm{opt}}_2(2^i) = ({(i-1)2^{i+1}+i+2})/({2^i+1})$ for all $2\leq i\leq l$. Noting that $T^{\mathrm{opt}}_2(1)=0$ and $T^{\mathrm{opt}}_2(2)=1$, it is easy to see that $T^{\mathrm{opt}}_2(2^i) = ({(i-1)2^{i+1}+i+2})/({2^i+1})$ for all $0\leq i\leq l$. This completes the proof.
\end{proof}

\end{document}